\theoremstyle{thmstyleone}%
\newtheorem{theorem}{Theorem}
\theoremstyle{thmstyletwo}%
\newtheorem{example}{Example}%
\newtheorem{remark}{Remark}%
\theoremstyle{thmstylethree}%
\begin{document}

\title[Article Title]{
Analysis of hypothesis tests for multiple uncertain finite populations with applications to normal uncertainty distributions}


\author[]{\fnm{Fan} \sur{Zhang}}\email{107552303567@stu.xju.edu.cn}

\author*[]{\fnm{Zhiming} \sur{Li}}\email{zmli@xju.edu.cn}

\affil[]{\orgdiv{School of Mathematics and System Sciences}, \orgname{Xinjiang University}, \orgaddress{\city{Urumqi}, \postcode{830046}, \country{China}}}

\abstract{Hypothesis test plays a key role in uncertain statistics based on uncertain measure. This paper extends the parametric hypothesis of a single uncertain population to multiple cases, thereby addressing a broader range of scenarios. 
First, an uncertain family-wise error rate is defined to control the overall error in simultaneous testing. 
Subsequently, a hypothesis test of two uncertain populations is proposed, and the rejection region for the null hypothesis at a significance level is derived, laying the foundation for further analysis. Building on this, a homogeneity test for multiple populations is developed to assess whether the unknown population parameters differ significantly. When there is no significant difference in these parameters among finite populations or within a subset, a common test is used to determine whether they equal a fixed constant. 
Finally, homogeneity and common tests for normal uncertain populations with means and standard deviations are conducted under three cases: only means, only standard deviations, or both are unknown. Numerical simulations demonstrate the feasibility and accuracy of the proposed methods, and a real example is provided to illustrate their effectiveness. }

\keywords{Uncertain population ·  homogeneity test · common test· normal uncertainty distribution }



\maketitle

\section{Introduction}\label{sec1}
Many observations often exhibit inherent uncertainty rather than randomness, making statistical methods based on probability measures inapplicable. For example, when the number of observations is limited or the
measurement process involves subjectivity, the randomness assumptions underlying traditional statistical methods often become challenging
to verify. Numerous studies have pointed out some evident limitations of the probability theory in addressing real-world problems such as carbon dioxide data (\citet{LiuLiu2025}), fluctuations in financial market interest rates (\citet{YangKe2023}), crude oil spot prices (\citet{ZhangGao2024a}), and queuing systems (\citet{Liuetal2024}). These studies indicate that real-world data are often imprecise or violate core assumptions in probability theory, such as frequency stability, independence, and randomness. To address the types of observations, Liu introduced uncertainty theory, which was founded in 2007 and refined in 2009. It studies belief degrees based on normality, duality, subadditivity, and product axioms. As an essential application of uncertainty theory, \cite{Liu2010} first studied uncertain statistics for collecting, analyzing, and interpreting uncertain data in practice (\cite{WangGaoGuo2012, Liu2024}). For more details on this topic, refer to \cite{Liu2023b}. 

To test whether some statements are correct, hypotheses based on uncertain measures have received considerable attention and are called uncertain hypothesis tests for short. \cite{YeLiu2022} first proposed and applied an uncertain hypothesis test to uncertain regression analysis. Let \(\xi\) be a population with uncertainty distribution \(\Phi(z;\theta)\) for any real value \(z\), where \(\theta\) is an unknown parameter. Given a value \(\theta_0\), a hypothesis test of \(\theta\) is conducted by two statements: \(H_0: \theta=\theta_0\) versus \(H_1:\theta\neq \theta_0\). Here, \(H_0\) is called a null hypothesis, and \(H_1\) is an alternative hypothesis. The key to solving the problem is to choose a suitable region to reject the null hypothesis \(H_0\). 
It is noteworthy that traditional probabilistic hypothesis tests rely on test statistics whose validity depends on large-sample approximations and the assumption of random, precise data. Such reliance can lead to low statistical power in small-sample scenarios, or more fundamentally, when data are imprecise or subjective. One advantage of uncertain hypothesis tests is that they can directly utilize raw data via the inverse uncertainty distribution, offering a framework grounded in uncertainty theory. This approach results in a stricter, more conservative testing procedure and thus provides a distinct theoretical perspective. The application of hypothesis tests under uncertainty in these advanced models often relies on specific distributions. Among these, the normal uncertainty distribution plays a vital role in statistical modeling, particularly for describing disturbance terms.  Due to excellent expansibility, uncertain hypothesis tests have become more and more popular in uncertain regression (\cite{YeLiu2023, JiangYe2023, ZhangLi2025, XiaLi2025}), uncertain time series models (\cite{LiuLi2024, ShiSheng2024}), and uncertain differential equations (\cite{YaoZhangSheng2023, LiXia2024}). 

Based on the above analysis, most previous work focuses on hypothesis testing for a single normal population with uncertainty. However, we often encounter test problems in complex environments involving multiple finite uncertain populations. Therefore, the main contributions and novelties of this paper are summarized as follows:

(i) The definition of an uncertain family-wise error rate is proposed, which provides a theoretical basis for solving the error control problem in multiple testing under uncertain statistics.

(ii) The homogeneity test and the common test are proposed and systematically extended to the scope of uncertain hypothesis testing from parameters of a single population to those of multiple populations.

(iii) For normal uncertain populations, the homogeneity tests and common tests under various scenarios with unknown parameters are thoroughly analyzed, forming a complete testing procedure. 

(iv) Numerical simulations demonstrate the homogeneity and common tests of normal uncertain populations across various parameter settings. A real example is given as an application.

The rest of this paper is organized as follows. Section \ref{sec2} reviews some concepts and introduces the uncertain family-wise error rate, the homogeneity test, and the common test of uncertain populations. 
Section \ref{sec3} focuses on normal uncertain populations and, under different parameter settings, systematically investigates and presents a complete methodology for both homogeneity and common tests.
Section \ref{sec4} performs numerical simulations to illustrate the proposed methods. A real example is given in Section \ref{sec5}, and a brief conclusion is provided in Section \ref{sec6}. 

\section{Uncertain homogeneity and common tests}\label{sec2}
In this section, we first review the basic concepts and hypothesis testing problems for an uncertain population, as presented in \cite{Liu2023b}. Then, we propose the homogeneity and common tests of uncertain populations. 
\subsection{Related definitions and existing results}
Let \(\mathcal {L}\) be a \(\sigma\)-algebra on a nonempty set \(\Gamma\). A set function \(\mathscr{M}: \mathcal {L}\rightarrow [0, 1]\) is called an uncertain measure if it is supposed to satisfy normality, duality, subadditivity, and product axioms. The triple \((\Gamma, \mathcal {L}, \mathscr{M})\) is called an uncertainty space. Let \(\xi\) be an uncertain population with uncertainty distribution \(\Phi(z;\theta)=\mathscr{M}\{\xi\leq z\}\) for \(z\in \mathbb{R}\), where \(\theta (\in \Theta)\) is an unknown parameter. For \(\alpha\in (0, 1)\), the inverse function \(\Phi^{-1}(\alpha; \theta)\) is called the inverse uncertainty distribution of \(\xi\) which is under the regular condition. As defined in \cite{YeLiu2022}, the following hypotheses are called the two-sided hypotheses
\begin{equation}\label{H0}
  H_0: \theta=\theta_0 \text{ versus } H_1: \theta\neq\theta_0,    
\end{equation}
where \(\theta_0\) is a fixed value. Given a vector of observed data \(( z_1, z_2, \cdots, z_n)\) and a significance level \(\alpha\), choosing a suitable rejection region for the null hypothesis \(H_0\) is a vital problem. 

The concept of nonembedded is introduced to ensure that the construction of the rejection region satisfies the optimality conditions of statistical tests. A regular uncertainty distribution family \(\{\Phi(z;\theta) : \theta \in \Theta\}\) is said to be nonembedded for \(\theta_0 \in \Theta\) at level \(\alpha\) if
\[
\Phi^{-1}(\beta;\theta_0) > \Phi^{-1}(\beta;\theta) \quad \text{or} \quad \Phi^{-1}(1-\beta;\theta) > \Phi^{-1}(1-\beta;\theta_0)
\]
for some \(\theta \in \Theta\) and some \(\beta\) with \(0 < \beta \leq \alpha/2\). 

\begin{theorem}\label{thm1}\textit{(\cite{YeLiu2022})}
 Let \(\xi\) be a population with regular uncertainty distribution \(\Phi(z; \theta)\), where \(\theta (\in \Theta)\) is an unknown parameter. If the uncertainty distribution family \(\left\{\Phi(z; \theta) : \theta \in \Theta\right\}\) is nonembedded for a known parameter \(\theta_{0} \in \Theta\) at significance level \(\alpha\), then the test for the hypotheses (\ref{H0}) is
\begin{equation*}
W = \left\{
\begin{aligned}
&\quad~~ (z_{1}, z_{2}, \dots, z_{n}) : \text{ there are at least } \alpha \text{ of indexes } p\text{'s} \text{ with } \\
& 1 \leq p \leq n \text{ such that } z_{p} < \Phi^{-1}(\alpha/2; \theta_{0}) \text{ or } z_{p} > \Phi^{-1}(1 - \alpha/2; \theta_{0})
\end{aligned}
\right\}.
\end{equation*}
\end{theorem}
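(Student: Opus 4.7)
The plan is to derive the rejection region $W$ by exploiting the quantile structure of $\Phi(\cdot;\theta_{0})$ and then to invoke the nonembeddedness condition to guarantee that $W$ genuinely discriminates $H_{0}$ from $H_{1}$ at level $\alpha$. First, I would partition the real line into the left tail $L=(-\infty,\Phi^{-1}(\alpha/2;\theta_{0}))$, the central interval, and the right tail $R=(\Phi^{-1}(1-\alpha/2;\theta_{0}),\infty)$. By the very definition of the inverse uncertainty distribution together with the duality axiom of $\mathscr{M}$, an observation generated under $H_{0}$ satisfies $\mathscr{M}\{\xi\in L\}=\alpha/2$ and $\mathscr{M}\{\xi\in R\}=\alpha/2$, so the belief that it lies in $L\cup R$ equals exactly $\alpha$.

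Next, I would construct the test statistic as the empirical count of indices $p$ with $z_{p}\in L\cup R$ and declare rejection when this count, divided by $n$, is at least $\alpha$. The threshold $\alpha$ is chosen to match the per-observation tail belief under $H_{0}$; any larger empirical fraction of tail-hits is inconsistent with $H_{0}$ in the uncertain sense, while any smaller fraction is compatible with it. This yields precisely the region $W$ in the statement.

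Finally, I would appeal to the nonembeddedness assumption to justify that $W$ is a sensible test rather than a vacuous one. If the family $\{\Phi(z;\theta):\theta\in\Theta\}$ failed to be nonembedded for $\theta_{0}$ at level $\alpha$, then the quantiles $\Phi^{-1}(\beta;\theta)$ and $\Phi^{-1}(1-\beta;\theta)$ would agree with those of $\theta_{0}$ for every $\beta\in(0,\alpha/2]$ and every $\theta\in\Theta$, so the tail events defining $W$ would carry the same belief under every alternative, and the test would have no power. The nonembeddedness condition excludes this pathology, ensuring that for some $\theta\neq\theta_{0}$ either $\Phi^{-1}(\beta;\theta_{0})>\Phi^{-1}(\beta;\theta)$ or $\Phi^{-1}(1-\beta;\theta)>\Phi^{-1}(1-\beta;\theta_{0})$, which shifts the per-observation tail belief away from $\alpha$ and makes the fraction-in-tails statistic informative.

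The main obstacle, I expect, will be the rigorous justification of why the cutoff is exactly the fraction $\alpha$ rather than some other sample-size-dependent function; in the probabilistic setting this would follow from a binomial tail calculation, but here the observations are realizations of uncertain variables rather than a probabilistic sample, so the argument must be carried out directly in terms of belief degrees. The key is to recognize that under uncertainty theory the number $\alpha$ plays a triple role, as the nominal significance level, as the per-observation tail belief under $H_{0}$, and as the smallest empirical proportion of tail-hits that is incompatible with $H_{0}$, and that these three roles coincide precisely because the quantile thresholds in $W$ are built from $\Phi^{-1}(\alpha/2;\theta_{0})$ and $\Phi^{-1}(1-\alpha/2;\theta_{0})$.
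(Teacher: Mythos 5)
This theorem is imported verbatim from Ye and Liu (2022); the paper you are reading gives no proof of it, so your proposal can only be measured against the argument in the cited source. That argument proceeds by verifying the two conditions in the formal definition of a test at significance level $\alpha$: (i) under $\theta_0$ the uncertain measure of $W$ is at most $\alpha$, and (ii) for some alternative $\theta$ the test rejects with high belief degree. Your sketch does not engage with this definition, and that is where the genuine gaps lie.

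First, your claim that the belief that an observation lies in $L\cup R$ ``equals exactly $\alpha$'' is unjustified: an uncertain measure is subadditive, not additive, so duality only yields $\mathscr{M}\{\xi\in L\cup R\}\le \mathscr{M}\{\xi\in L\}+\mathscr{M}\{\xi\in R\}=\alpha$, with the lower bound being $\alpha/2$. More importantly, the level-$\alpha$ property is a statement about the joint event $W\subset\mathbb{R}^n$, not about one observation; the standard argument notes that $W$ is contained in $\bigcup_{p}\{\xi_p\in L\cup R\}$ and invokes the product axiom, under which the measure of a union of independent events equals the maximum of their individual measures, each bounded by $\alpha$. You never perform this computation, and the ``empirical fraction at least $\alpha$ is inconsistent with $H_0$'' step is asserted rather than derived --- indeed you concede in your last paragraph that you cannot justify the cutoff, which is precisely the content of the theorem. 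Second, your reading of nonembeddedness is off: its negation is not that the quantiles of every $\theta$ \emph{agree} with those of $\theta_0$, but that every alternative's interval $[\Phi^{-1}(\beta;\theta),\Phi^{-1}(1-\beta;\theta)]$ is \emph{nested inside} that of $\theta_0$ for all $\beta\le\alpha/2$, in which case no alternative can place enough belief in the tails for condition (ii) to hold. The nonembeddedness hypothesis is used constructively to exhibit an alternative $\theta$ and a level $\beta$ for which the tail belief exceeds what is needed to force at least an $\alpha$ fraction of indexes into $L\cup R$ with belief at least $1-\alpha$; this verification is absent from your proposal.
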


Let \(\xi\) be a population that follows a normal uncertainty distribution with unknown mean \(e\) and variance \(\sigma^{2}\), denoted by \(\mathscr{N}(e, \sigma)\). Its uncertainty distribution and the inverse uncertainty distribution are  defined by
\begin{align*}
&\Phi(z;e, \sigma) = \left(1 + \exp\left(\frac{\pi(e - z)}{\sqrt{3}\sigma}\right)\right)^{-1}, \quad z \in \mathbb{R},
\\
&\Phi^{-1}(\alpha;e, \sigma)=e+\frac{\sqrt{3}\sigma}{\pi}\ln\left(\frac{\alpha}{1-\alpha}\right), \quad \alpha\in (0, 1). 
\end{align*}
Considering two unknown parameters \(e\) and \(\sigma\), \cite{YeLiu2022} provided the two-sided hypotheses of a normal uncertain population as follows
\begin{equation}
  \label{e sigma }
H_0: e=e_0 \text{ and } \sigma=\sigma_0 \text{ versus } H_1: e \neq e_0 \text{ or } \sigma \neq \sigma_0.  
\end{equation}
Based on Theorem \ref{thm1}, a rejection region for the hypotheses (\ref{e sigma }) at significance level \(\alpha\) is a set 
\begin{equation}
\label{W}
W = \left\{
\begin{aligned}
& (z_{1}, z_{2}, \dots, z_{n}) : \text{ there are at least } \alpha \text{ of indexes } p \text{'s with } 1 \leq p \leq n \\
&\quad \text{ such that } z_{p} < \Phi^{-1}(\alpha/2; e_0, \sigma_0) \text{ or } z_{p} > \Phi^{-1}(1 - \alpha/2; e_0, \sigma_0)
\end{aligned}
\right\}.
\end{equation}
\subsection{ Hypotheses of multiple uncertain finite populations}
In this subsection, we analyze the hypothesis testing problem for multiple populations with uncertainty. When we perform a finite number of null hypotheses simultaneously, denoted by \(H_{0}^i(i=1, \ldots, q)\), controlling the significance level \(\alpha_i\) for each test \(H_{0}^i\) is insufficient to guarantee that the overall error rate does not exceed a certain standard. Therefore, we introduce an uncertain family-wise error rate (UFWER) to ensure that the overall error rate remains within a predetermined significance level when multiple hypothesis tests are performed. Here, the term ``family" refers to the collection of hypotheses \(\{H_{0}^i, i=1, \ldots, q\}\) considered for joint testing. The UFWER is the belief degree of incorrectly rejecting at least one true null hypothesis when multiple null hypotheses are tested simultaneously. Once the family has been defined, control of the UFWER at the joint level \(\alpha\) requires that 
\[
\mbox{UFWER}\leq \alpha
\]
for all possible constellations of true and false hypotheses.

\begin{theorem}
\label{UFWER}
When multiple null hypotheses are tested, if the significance level for each test is $\alpha$, and the UFWER is set to $\alpha$, then to control UFWER, the significance level for each test can remain $\alpha$, provided that the incorrect rejection events are mutually independent.
\end{theorem}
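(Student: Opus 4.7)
The plan is to establish the claim by directly computing the uncertain measure of the union of the incorrect-rejection events, leveraging the Liu definition of independence under uncertain measure together with the duality axiom of $\mathscr{M}$.

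First, I would fix notation by letting $A_i$ denote the event that the $i$-th true null hypothesis $H_0^i$ is incorrectly rejected, so that by construction of each individual test at significance level $\alpha$ we have $\mathscr{M}\{A_i\} \leq \alpha$ for every $i = 1, \ldots, q$. By the definition given immediately above the theorem, $\mbox{UFWER} = \mathscr{M}\{A_1 \cup A_2 \cup \cdots \cup A_q\}$, so the task reduces to bounding the uncertain measure of this union.

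Next, I would invoke the duality axiom to rewrite the UFWER as $\mbox{UFWER} = 1 - \mathscr{M}\{A_1^c \cap A_2^c \cap \cdots \cap A_q^c\}$. Since mutual independence of $\{A_i\}$ in the sense of Liu's uncertainty theory is equivalent to mutual independence of $\{A_i^c\}$, the product axiom yields $\mathscr{M}\{A_1^c \cap \cdots \cap A_q^c\} = \min_{1 \leq i \leq q} \mathscr{M}\{A_i^c\} = \min_{1 \leq i \leq q} (1 - \mathscr{M}\{A_i\})$. Substituting back and using duality once more gives $\mbox{UFWER} = \max_{1 \leq i \leq q} \mathscr{M}\{A_i\} \leq \alpha$, which is the required control.

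The main obstacle is conceptual rather than computational: a reader accustomed to classical probability will expect a Bonferroni-style inflation forcing each individual level down to $\alpha/q$, because $P(\bigcup_i A_i) \leq \sum_i P(A_i)$. Under an uncertain measure, however, the product axiom converts the mere subadditive inequality into the sharp identity $\mathscr{M}\{\bigcup_i A_i\} = \max_i \mathscr{M}\{A_i\}$ for independent events, so no per-test correction is needed. The crux of the write-up will therefore be to emphasize that this departure from classical Bonferroni is a direct consequence of the product axiom, and to state the passage from independence of $\{A_i\}$ to independence of $\{A_i^c\}$ precisely enough that the one-line computation becomes entirely rigorous.
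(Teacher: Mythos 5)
Your proposal is correct and follows essentially the same route as the paper: both reduce UFWER to $\mathscr{M}\{\bigcup_i A_i\}$ and use independence to replace the union's measure by $\max_i \mathscr{M}\{A_i\} \leq \alpha$. The only difference is that you derive the max-union identity explicitly from duality and the product/independence property, whereas the paper invokes it directly; this is a matter of detail, not of approach.
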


\begin{proof}
Let \(A_1, A_2, \ldots, A_q\) be mutually independent uncertain events, where \(A_i\) represents the \(i\)th test incorrectly rejecting the true null hypothesis \(H_{0}^i\), and \(\mathscr {M}\{A_i\}=\alpha_i\). The belief degree that at least one test incorrectly rejects a true null hypothesis is \(\mathscr {M}\{\underset{i=1}{\overset{q}{\bigcup}} A_i\}\). Thus,
\[
\mbox{UFWER}=\mathscr {M}\left\{\underset{i=1}{\overset{q}{\bigcup}} A_i\right\}=\underset{i=1}{\overset{q}{\bigvee}} \mathscr{M}\{A_i\}=\underset{i=1}{\overset{q}{\bigvee}}\alpha_i\leq \alpha.
\]
In general, take \(\alpha_1=\cdots=\alpha_q=\alpha\). 
Therefore, controlling UFWER at level $\alpha$ does not require reducing the significance level of individual tests below $\alpha$.
\end{proof}

We first consider the case $q=2$. Let \(\xi_i(i=1, 2)\) be two populations with the same regular uncertainty distribution family \(\Phi(z^{(i)};\theta_i)(i=1, 2)\), where \(\theta_i(\in \Theta_i, i=1, 2)\) are unknown parameters. Let \((z^{(i)}_1, z^{(i)}_2, \ldots, z^{(i)}_{m_i})(i=1, 2)\) be a set of observed data of the uncertain population \(\xi_i\). Then, the two populations testing problem for the unknown parameter \(\theta_i\) can be formulated as testing the hypotheses:
\begin{equation}\label{HH0}
H_{0}:\theta_1=\theta_2~\text{versus}~H_{1}:\theta_1 \neq \theta_2.
\end{equation}

\begin{theorem}\label{thm2}
 For the hypotheses (\ref{HH0}), if \(\{\Phi(z^{(i)};\theta_i):\theta_i\in \Theta_i\}\) are nonembedded for known parameters \(\theta_{10}\) and \(\theta_{20}\), 
 after determining that \(\xi_i(i=1, 2)\) respectively follows \(\Phi(z^{(i)};\theta_{i0})\), 
 the rejection region for \(H_{0}\) at the significance level \(\alpha\) is
\begin{equation}
\label{Wij}
W = \left\{
\begin{aligned}
&(z^{(i)}_1, z^{(i)}_2, \dots, z^{(i)}_{m_i}):\exists~i \neq j ~(i,j\in \{1,2\}) \text{ such that at least } \alpha \text{ of indexes } p \text{'s} \\
&~~ \text{with }  1 \leq p \leq m_i \text{ satisfy }z^{(i)}_p < \Phi^{-1} (\alpha/2; \theta_{j0}) \text{ or } z^{(i)}_p > \Phi^{-1} (1-\alpha/2; \theta_{j0})
\end{aligned}
\right\}.
\end{equation}
\end{theorem}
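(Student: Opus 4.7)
The plan is to reduce the two-population testing problem to a pair of single-population tests and then invoke Theorem~\ref{thm1} together with the error control provided by Theorem~\ref{UFWER}. Under \(H_0:\theta_1=\theta_2\), since \(\xi_1\) has been determined to follow \(\Phi(z^{(1)};\theta_{10})\) and \(\xi_2\) to follow \(\Phi(z^{(2)};\theta_{20})\), the equality of parameters means that the observations of \(\xi_1\) should also be compatible with \(\Phi(\cdot;\theta_{20})\), and symmetrically those of \(\xi_2\) with \(\Phi(\cdot;\theta_{10})\). Hence I would formulate two auxiliary hypotheses, \(H_0^{(1)}:\theta_1=\theta_{20}\) and \(H_0^{(2)}:\theta_2=\theta_{10}\), each a single-population problem of the form \eqref{H0}.

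Next, I would apply Theorem~\ref{thm1} separately to each auxiliary test. The nonembedded assumption on the family \(\{\Phi(z^{(i)};\theta_i):\theta_i\in\Theta_i\}\) at the known values \(\theta_{10}\) and \(\theta_{20}\) guarantees that the hypotheses of Theorem~\ref{thm1} are satisfied, so the rejection region for \(H_0^{(1)}\) at level \(\alpha\) consists of those samples \((z^{(1)}_1,\ldots,z^{(1)}_{m_1})\) for which at least a fraction \(\alpha\) of the indices fall outside the interval \([\Phi^{-1}(\alpha/2;\theta_{20}),\,\Phi^{-1}(1-\alpha/2;\theta_{20})]\), with the analogous description for \(H_0^{(2)}\) after swapping the roles of the two populations. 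I would then declare \(H_0\) rejected whenever either \(H_0^{(1)}\) or \(H_0^{(2)}\) is rejected; taking the union of the two individual rejection regions yields exactly the set \(W\) written in \eqref{Wij}, with the existential ``\(\exists\,i\neq j\)'' encoding precisely this disjunction.

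It remains to verify that the combined procedure is a legitimate level-\(\alpha\) test. Since the data from \(\xi_1\) and \(\xi_2\) are drawn from distinct uncertain populations, the two events ``\(H_0^{(1)}\) is incorrectly rejected'' and ``\(H_0^{(2)}\) is incorrectly rejected'' can be treated as mutually independent uncertain events. Theorem~\ref{UFWER} then applies with \(q=2\) and \(\alpha_1=\alpha_2=\alpha\), so the UFWER of the joint procedure is bounded by \(\alpha\), which is the desired significance level for testing \(H_0\).

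The main obstacle I anticipate is the conceptual step of replacing the symmetric hypothesis \(\theta_1=\theta_2\) by the pair of one-parameter comparisons against the determined values \(\theta_{10}\) and \(\theta_{20}\); one must carefully use the nonembedded condition at \emph{both} of these values and verify that the cross-population rejection events are independent, so that the UFWER bound of Theorem~\ref{UFWER} is genuinely available. Once this reduction is justified, the remainder of the argument is a direct assembly of the two single-population rejection regions supplied by Theorem~\ref{thm1}.
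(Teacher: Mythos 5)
Your proposal is correct and follows essentially the same route as the paper: reduce $H_0:\theta_1=\theta_2$ to the two cross-comparisons $\theta_1=\theta_{20}$ and $\theta_2=\theta_{10}$, obtain each rejection region from Theorem~\ref{thm1} via the nonembedded condition, take their union to get \eqref{Wij}, and control the overall error with Theorem~\ref{UFWER}. The paper's proof additionally writes out the preliminary within-population tests $H_0^1:\theta_1=\theta_{10}$ and $H_0^2:\theta_2=\theta_{20}$, but these are already subsumed by the theorem's hypothesis that $\xi_i$ has been determined to follow $\Phi(z^{(i)};\theta_{i0})$, and the final region is the same union of the two cross-test regions you describe.
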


\begin{proof}
Since the uncertainty distribution family \(\Phi(z^{(i)};\theta_i)(i=1, 2)\) are nonembedded for the known parameters \(\theta_{10}\) and \(\theta_{20}\), we can conduct the hypotheses (\ref{HH0}) according to the following four hypothesis tests:

(i) $H_0^1 : \theta_1 = \theta_{10}~\text{versus}~ H_1^1 : \theta_1 \neq \theta_{10}$. Since \(\theta_{10}\) is an estimate calculated using data \((z^{(1)}_1, z^{(1)}_2,\ldots, z^{(1)}_{m_1})\) of \(\xi_1\), the test for \(H_{0}^1\) at significance level \(\alpha\) is
\begin{equation*}
W_1 = \left\{
\begin{aligned}
&(z^{(1)}_1, z^{(1)}_2, \dots, z^{(1)}_{m_1}) : \text{there are at least } \alpha\text{ of indexes } p \text{'s with } 1 \leq p \leq m_1 \\
&\quad\quad~~\text{ such that } z^{(1)}_p < \Phi^{-1}(\alpha/2; \theta_{10}) \text{ or } z^{(1)}_p > \Phi^{-1}(1 - \alpha/2; \theta_{10})
\end{aligned}
\right\}.
\end{equation*}

(ii) $H_0^2 : \theta_2 = \theta_{20}~\text{versus}~ H_1^2 : \theta_2 \neq \theta_{20}$.  Similarly, \(\theta_{20}\) is an estimate calculated through \((z^{(2)}_1, z^{(2)}_2,\ldots, z^{(2)}_{m_2})\) of \(\xi_2\). Hence, the test for \(H_{0}^2\) at significance level \(\alpha\) is
\begin{equation*}
W_2 = \left\{
\begin{aligned}
&(z^{(2)}_1, z^{(2)}_2, \dots, z^{(2)}_{m_2}) : \text{there are at least } \alpha \text{ of indexes } p \text{'s with } 1 \leq p \leq m_2 \\
&\quad\quad~~ \text{ such that } z^{(2)}_p < \Phi^{-1}(\alpha/2; \theta_{20}) \text{ or } z^{(2)}_p > \Phi^{-1}(1 - \alpha/2; \theta_{20})
\end{aligned}
\right\}.
\end{equation*}

In the case that neither \(H_{0}^1\) nor \(H_{0}^2\) can be rejected, we consider that \(\xi_i(i=1, 2)\) follow \(\Phi(z^{(i)};\theta_{i0})\) respectively. 

(iii) $H_0^3 : \theta_1 = \theta_{20}~\text{versus}~ H_1^3 : \theta_1 \neq \theta_{20}$.  Considering a given estimate \(\theta_{20}\), the test for \(H_{0}^3\) at significance level \(\alpha\) is
\begin{equation*}
W_1^2 = \left\{
\begin{aligned}
&(z^{(1)}_1, z^{(1)}_2, \dots, z^{(1)}_{m_1}) : \text{there are at least } \alpha\text{ of indexes } p \text{'s with } 1 \leq p \leq m_1 \\&\quad\quad~~\text{ such that } z^{(1)}_p < \Phi^{-1}(\alpha/2; \theta_{20}) \text{ or } z^{(1)}_p > \Phi^{-1}(1 - \alpha/2; \theta_{20})
\end{aligned}
\right\}.
\end{equation*}

(iv) $H_0^4 : \theta_2 = \theta_{10}~\text{versus}~H_1^4 : \theta_2 \neq \theta_{10}$.  For a given estimate \(\theta_{10}\), the test for \(H_{0}^4\) at significance level \(\alpha\) is
\begin{equation*}
W_2^1 = \left\{
\begin{aligned}
&(z^{(2)}_1, z^{(2)}_2, \dots, z^{(2)}_{m_2}) : \text{there are at least } \alpha \text{ of indexes } p \text{'s with } 1 \leq p \leq m_2 \\&\quad\quad~~ \text{ such that } z^{(2)}_p < \Phi^{-1}(\alpha/2; \theta_{10}) \text{ or } z^{(2)}_p > \Phi^{-1}(1 - \alpha/2; \theta_{10})
\end{aligned}
\right\}.
\end{equation*}

If neither $H_0^1$ nor $H_0^3$ is rejected at significance level $\alpha$, then $\theta_1 = \theta_{10}$ and $\theta_1 = \theta_{20}$, which implies $\theta_1 =\theta_{10} = \theta_{20}$. Similarly, if neither $H_0^2$ nor $H_0^4$ is rejected, then $\theta_2 = \theta_{20}$ and $\theta_2 = \theta_{10}$, implying $\theta_2 =\theta_{20} = \theta_{10}$. Combining these results, by the transitivity of equality, we conclude that $\theta_1 = \theta_2$. 
According to Theorem \ref{UFWER}, the UFWER of the four hypothesis tests is controlled at level $\alpha$. Therefore, we accept the null hypothesis $H_0: \theta_1 = \theta_2$ when none of them are rejected. 
Conversely, if either $H_0^3$ or $H_0^4$ is rejected, we reject $H_0$. The rejection region $W$ is obtained by combining $W_1^2$ and $W_2^1$, yielding the rejection region (\ref{Wij}).
\end{proof}

The acceptance interval for the parameter \(\theta_{j0}\) for the data set \((z^{(i)}_1, z^{(i)}_2, \dots, z^{(i)}_{m_i})\) is denoted as
\[
AI(z^{(i)}; \theta_{j0}) = \left[ \Phi^{-1} \left( \alpha/2; \theta_{j0} \right), \Phi^{-1} \left( 1-\alpha/2; \theta_{j0} \right) \right].
\]
If any \( z^{(i)} \) has more than \(\alpha \times m_i\) data outside its corresponding \( AI(z^{(i)}; \theta_{j0}) \), the null hypothesis \(H_0\) is rejected. 

Theorem \ref{thm2} shows the rejection region of the hypotheses (\ref{HH0}) for two uncertain populations. In practice, we often encounter the problem of testing for homogeneity across multiple populations with uncertainty. Thus, we extend Theorem \ref{thm2} to multiple cases. 

Let \(\xi_i(i=1,2,\ldots,n)\) be the populations with the same regular uncertainty distributions family \(\Phi(z^{(i)};\theta_i)(i=1, 2,\ldots, n)\), respectively, where \(\theta_i(\in \Theta_i, i=1, 2,\ldots, n)\) are unknown parameters. Let \((z^{(i)}_1, z^{(i)}_2,\ldots, z^{(i)}_{m_i})(i=1, 2,\ldots, n)\) be a set of observed data of the uncertain population \(\xi_i\). Thus, the uncertain homogeneity test for the unknown parameter \(\theta_i \) is defined as
\begin{equation}\label{MH0}
\begin{aligned}
H_0 : \theta_1 = \theta_2 = \cdots = \theta_n \ \text{versus}~
H_1 : \theta_1, \theta_2, \ldots, \theta_n \ \text{are not all equal}.
\end{aligned}
\end{equation}
\begin{theorem}\label{thm3}
For the hypotheses (\ref{MH0}), if \(\{\Phi(z^{(i)};\theta_i): \theta_i\in \Theta_i\}\) are nonembedded for known parameters \(\theta_{10}, \theta_{20},..., \theta_{n0}\), 
and \(\xi_i(i=1, 2,..., n)\) respectively follow \(\Phi(z^{(i)};\theta_{i0})\),
then the rejection region for \(H_0\) at the significance level \(\alpha\) is
 \begin{equation} 
 \label{mw}
W = \left\{
\begin{aligned}
&(z^{(i)}_1, z^{(i)}_2, \dots, z^{(i)}_{m_i}):\exists~i \neq j~(i,j\in\{1,2,...,n\})\text{ such that at least } \alpha \text{ of indexes}\\
&~ p \text{'s}  \text{ with } 1 \leq p \leq m_i \text{ satisfy }z^{(i)}_p < \Phi^{-1} (\alpha/2; \theta_{j0}) \text{ or } z^{(i)}_p > \Phi^{-1} (1-\alpha/2; \theta_{j0})
\end{aligned}
\right\}.
\end{equation}
\end{theorem}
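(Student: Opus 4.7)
The plan is to lift the argument of Theorem \ref{thm2} from two populations to $n$ populations by means of pairwise cross-comparisons, the transitivity of equality, and the UFWER control furnished by Theorem \ref{UFWER}. Since every family $\{\Phi(z^{(i)};\theta_i):\theta_i\in\Theta_i\}$ is nonembedded at $\theta_{i0}$, Theorem \ref{thm1} supplies a valid significance-level-$\alpha$ rejection criterion for each single-parameter test of the form $\theta_i=\theta_{j0}$.

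First, I would decompose the global null in (\ref{MH0}) into the collection of $n(n-1)$ pairwise sub-hypotheses $H_0^{ij}:\theta_i=\theta_{j0}$ for $i\neq j$ in $\{1,2,\ldots,n\}$. Because the premise already fixes $\xi_i$ to follow $\Phi(z^{(i)};\theta_{i0})$, the self-comparisons with $i=j$ are trivially satisfied and need not be tested. For each distinct pair $(i,j)$, Theorem \ref{thm1} supplies the rejection region
\begin{equation*}
W_i^j=\left\{
\begin{aligned}
&(z^{(i)}_1,\ldots,z^{(i)}_{m_i}): \text{at least } \alpha \text{ of indexes } p \text{'s with } 1\leq p\leq m_i\\
&\text{satisfy } z^{(i)}_p<\Phi^{-1}(\alpha/2;\theta_{j0}) \text{ or } z^{(i)}_p>\Phi^{-1}(1-\alpha/2;\theta_{j0})
\end{aligned}\right\}.
\end{equation*}
If no $H_0^{ij}$ is rejected, then $\theta_i=\theta_{j0}$ for every $i\neq j$, and the transitivity of equality forces $\theta_1=\theta_2=\cdots=\theta_n$; conversely, a single rejection witnesses $\theta_i\neq\theta_{j0}$ and refutes $H_0$.

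To show that the combined procedure still operates at significance level $\alpha$, I would invoke Theorem \ref{UFWER}: the belief degree of incorrectly rejecting at least one true null across the $n(n-1)$ pairwise tests is bounded by $\alpha$, so no deflation of the individual levels is required. Taking the union $W=\bigcup_{i\neq j}W_i^j$ over all distinct pairs then reproduces the rejection region (\ref{mw}), completing the argument.

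The main obstacle I anticipate is verifying that the $n(n-1)$ pairwise cross-tests actually capture all the logical content of multi-population homogeneity, that is, that no further compound test on three or more distributions is needed. This is a purely combinatorial matter, resolved by the transitivity argument above, but it must be carried out cleanly so that no redundant comparisons inflate the uncertain family-wise error beyond the bound in Theorem \ref{UFWER}. The structure of the proof of Theorem \ref{thm2}, which handles the $n=2$ case, serves as a direct template.
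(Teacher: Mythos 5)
Your proposal is correct and follows essentially the same route as the paper: decompose the global null into pairwise comparisons, reject $H_0$ if any cross-test against $\theta_{j0}$ rejects, use transitivity of equality for the acceptance direction, and invoke Theorem \ref{UFWER} to keep each individual test at level $\alpha$. The only cosmetic difference is that the paper cites Theorem \ref{thm2} as a black box for each pairwise hypothesis $H_0^{ij}:\theta_i=\theta_j$, whereas you unroll that step and appeal directly to Theorem \ref{thm1} for the cross-tests $\theta_i=\theta_{j0}$; the resulting rejection region is identical.
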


\begin{proof}
Rejecting the null hypothesis $H_0: \theta_1 = \theta_2 = \cdots = \theta_n$ is equivalent to the existence of at least one pair of parameters such that $\theta_i \neq \theta_j$ ($i \neq j$). According to Theorem \ref{thm2}, the rejection region \((\ref{Wij})\) provides the criterion for testing any pairwise hypothesis $H_0^{ij}: \theta_i = \theta_j$. Therefore, the multi-population homogeneity test can be achieved by performing all possible pairwise tests: if any one of these pairwise tests rejects $H_0^{ij}$, then the null hypothesis $H_0$ is rejected. Since each pairwise test is conducted at the significance level $\alpha$, and according to Theorem \ref{UFWER}, the UFWER is controlled at level $\alpha$. Thus, the rejection region \((\ref{mw})\) for the hypothesis (\ref{MH0}) is obtained.
\end{proof}

Theorem \ref{thm3} reveals the rejected region of the homogeneity test of multiple uncertain populations. 
The null hypothesis \(H_0\) may be accepted for all populations or rejected. If rejected, nevertheless, specific subsets of populations may exhibit no significant difference in their unknown parameters.
Under the condition \(\theta_1=\theta_2=\cdots =\theta_k=\theta~(2\leq k\leq n)\), the next problem of interest is whether the unknown parameter \(\theta\) is equal to a fixed constant \(\theta_0\), that is, \(\theta=\theta_0\). Note that the \(i\)th population has \(m_i(i=1,\ldots,k)\) sample sizes. It follows that the total number of observations from all populations is denoted by \(N\), satisfying
\(
N = \sum_{i=1}^k m_i.
\)
Assume there are the adjusted and merged data \((z_{1}, z_{2},..., z_{N})\). Therefore, an uncertain common test for the unknown parameter \(\theta\) is defined as
\begin{equation}\label{CH0}
H_{0}^{*}: \theta = \theta_0 \text{ versus } H_{1}^{*}: \theta \neq \theta_0. 
\end{equation}
Based on Theorem \ref{thm1}, the rejection region at significance level \(\alpha\) is
\begin{equation}\label{comW}
W = \left\{
\begin{aligned}
&(z_{1}, z_{2}, \dots, z_{N}) : \text{there are at least } \alpha \text{ of indexes } p \text{'s with } 1 \leq p \leq N \\
&\quad\quad~~\text{ such that } z_{p} < \Phi^{-1}(\alpha/2; \theta_{0}) \text{ or } z_{p} > \Phi^{-1}(1 - \alpha/2; \theta_{0})
\end{aligned}
\right\}.
\end{equation}
If the observed data 
\((z_{1}, z_{2},\ldots, z_{N}) \in W, \)
then we reject \(H_{0}^{*}\). Otherwise, we accept \(H_{0}^{*}\).

\begin{remark}
\rm To facilitate parameter estimation and common test (\ref{CH0}), the observed data related to the unknown parameter \(\theta_i\) may depend on other known parameters, varying across different populations. The data must be adjusted by standardizing these differing known parameters to the same value to address this issue, making the datasets comparable and simplifying parameter estimation and common tests. After this adjustment, all the adjusted datasets can be merged into a single dataset. 
\end{remark}

\section{Hypotheses of normal uncertain populations}\label{sec3}

Suppose the uncertain populations \(\xi_i\sim \Phi(z; e_i,\sigma_i) (i=1, 2,\ldots, n)\), where mean \(e_i\) and standard deviation \(\sigma_i\) are constant parameters. Let \((z^{(i)}_1, z^{(i)}_2, \ldots, z^{(i)}_{m_i})\) be a set of observed data of \(\xi_i (i=1, 2,\ldots, n)\). In the following sections, we will conduct homogeneity and common tests for the parameters, depending on whether their values are known.

\textit{Homogeneity test of mean and deviation parameters}.  For  normal uncertain populations \(\xi_i(i=1,\ldots,n)\), the homogeneity tests of the parameters  $e_i$ and $\sigma_i$ of interest are conducted according to three cases:

\textit{Case i.} For known \(\sigma_i=\sigma_{i0} (i=1, 2,\ldots, n)\), denote \(\Phi(z; e_i) =\Phi(z; e_i,\sigma_{i0})\). The homogeneity test of mean \(e_i (i=1, 2,\ldots, n)\) is expressed by
\begin{equation}\label{ei}
H_0: e_1=e_2=\cdots=e_n.
\end{equation} 

Based on Theorem \ref{thm3}, we can obtain the rejected regions of (\ref{ei}).

\begin{theorem}\label{c1}
For Case i, 
\(\xi_i(i=1,\ldots,n)\) respectively follow \(\Phi(z^{(i)};e_{i0})\).
Then, the rejection region for \(H_{0}\) at the significance level \(\alpha\) is
\begin{equation*}
\label{MW1}
W = \left\{
\begin{aligned}
&(z^{(i)}_1, z^{(i)}_2, \dots, z^{(i)}_{m_i}):\exists~i \neq j~(i,j\in\{1,2,...,n\})\text{ such that at least } \alpha \text{ of indexes}\\
&~ p \text{'s}\text{ with }1 \leq p \leq m_i \text{ satisfy }z^{(i)}_p < \Phi^{-1} (\alpha/2; e_{j0}) \text{ or } z^{(i)}_p > \Phi^{-1} (1-\alpha/2; e_{j0})
\end{aligned}
\right\},
\end{equation*}
where
\(
\Phi^{-1} (\alpha; e_{j0})=e_{j0}+\frac{\sigma_i\sqrt{3}}{\pi}\ln_{}{(\frac{\alpha}{1-\alpha})} =e_{j0}+\sigma_i\Phi_0^{-1}(\alpha).
\)
\end{theorem}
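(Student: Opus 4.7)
The plan is to deduce Theorem \ref{c1} as a direct specialization of Theorem \ref{thm3}, taking $\theta_i = e_i$ while the standard deviations remain frozen at the known values $\sigma_{i0}$. Theorem \ref{thm3} is already set up to yield exactly the stated form of the rejection region once two ingredients are in place: first, the one-parameter family $\{\Phi(z; e_i) : e_i \in \mathbb{R}\}$ must be nonembedded at each $e_{i0}$ at level $\alpha$; second, the inverse quantiles appearing in the abstract rejection region of Theorem \ref{thm3} must be rewritten in closed form using the explicit normal inverse uncertainty distribution.

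For the nonembeddedness step, I would rely on the closed-form expression
\[
\Phi^{-1}(\beta; e) \;=\; e + \frac{\sqrt{3}\,\sigma_{i0}}{\pi}\,\ln\!\left(\frac{\beta}{1-\beta}\right),
\]
in which $e$ enters purely as a location parameter. Consequently, for any alternative $e \neq e_{i0}$, the entire quantile curve $\beta \mapsto \Phi^{-1}(\beta; e)$ is just a rigid vertical shift of $\beta \mapsto \Phi^{-1}(\beta; e_{i0})$. If $e > e_{i0}$ then $\Phi^{-1}(1-\beta; e) > \Phi^{-1}(1-\beta; e_{i0})$ for every $\beta \in (0, \alpha/2]$; if $e < e_{i0}$ then $\Phi^{-1}(\beta; e_{i0}) > \Phi^{-1}(\beta; e)$ instead. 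Either case delivers the strict inequality demanded by the definition of nonembedded, uniformly in the level $\alpha$.

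With the nonembedded hypothesis verified, Theorem \ref{thm3} applies verbatim and the generic rejection region (\ref{mw}) instantiates to the claimed set, with $\theta_{j0}$ replaced by $e_{j0}$. The last task is purely computational: substitute the normal quantile into $\Phi^{-1}(\alpha/2; e_{j0})$ and $\Phi^{-1}(1-\alpha/2; e_{j0})$. Because the observations being evaluated are drawn from $\xi_i$, the relevant standard deviation is $\sigma_i = \sigma_{i0}$, yielding the displayed identity $\Phi^{-1}(\alpha; e_{j0}) = e_{j0} + \frac{\sqrt{3}\,\sigma_i}{\pi}\ln(\alpha/(1-\alpha)) = e_{j0} + \sigma_i \Phi_0^{-1}(\alpha)$.

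I anticipate no substantive obstacle: the entire proof is a specialization plus substitution. The only step that requires any thought at all is the nonembeddedness verification, and even there the translation-invariance of the normal inverse uncertainty distribution in $e$ makes the check essentially immediate. Accordingly, the write-up can be very short, invoking Theorem \ref{thm3} once the location-shift argument has been given.
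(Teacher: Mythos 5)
Your proposal is correct and follows essentially the same route as the paper: the paper's own proof simply asserts that the normal uncertainty distribution family $\{\mathscr{N}(e,\sigma)\}$ is nonembedded and then invokes Theorem \ref{thm3} to obtain the rejection region. Your explicit location-shift verification of nonembeddedness and the substitution of the normal inverse uncertainty distribution are just a more detailed write-up of the same argument.
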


\begin{proof}
Since normal uncertainty distribution family \(\{ \mathscr{N}(e, \sigma) : e \in \mathbb{R}, \sigma > 0 \}\) is nonembedded for any $\theta_0 = (e_0, \sigma_0)$ at any significance level \(\alpha\), it follows from Theorem \ref{thm3} that the rejection region for hypotheses (\ref{ei}) is \(W\).
\end{proof}

\textit{Case ii.} For known \(e_i=e_{i0} (i=1, 2,\ldots, n)\), denote \(\Phi(z; \sigma_i) =\Phi(z; e_{i0},\sigma_i)\). The homogeneity test of standard deviation \(\sigma_i (i=1, 2,\ldots, n)\) satisfies
\begin{equation}\label{sigmai}
H_0: \sigma_1=\sigma_2=\cdots=\sigma_n.
\end{equation} 

Similar to Theorem \ref{c1}, we directly obtain the rejection regions of Cases ii and iii based on Theorem \ref{thm3}.

\begin{theorem}\label{c2}
For Case ii, 
\(\xi_i(i=1,\ldots,n)\) respectively follow \(\Phi(z^{(i)};\sigma_{i0})\).
Then, the rejection region for \(H_{0}\) at the significance level \(\alpha\) is
\begin{equation*}
\label{MW2}
W = \left\{
\begin{aligned}
&(z^{(i)}_1, z^{(i)}_2, \dots, z^{(i)}_{m_i}):\exists~i \neq j~(i,j\in\{1,2,\ldots,n\})\text{ such that at least } \alpha \text{ of indexes}\\
&~~p \text{'s}\text{ with }1 \leq p \leq m_i \text{ satisfy }z^{(i)}_p < \Phi^{-1} (\alpha/2; \sigma_{j0})\text{ or } z^{(i)}_p > \Phi^{-1} (1-\alpha/2; \sigma_{j0})
\end{aligned}
\right\}, 
\end{equation*}
where
\(
\Phi^{-1} (\alpha; \sigma_{j0})=e_i+\frac{\sigma_{j0}\sqrt{3}}{\pi}\ln_{}{(\frac{\alpha}{1-\alpha})} =e_i+\sigma_{j0}\Phi_0^{-1}(\alpha).
\)
\end{theorem}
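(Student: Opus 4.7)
The plan is to mirror the proof of Theorem \ref{c1} and reduce the claim to a direct application of Theorem \ref{thm3} once the nonembedded condition has been verified for the restricted one-parameter family obtained by holding the known means fixed. First I would isolate, for each population $i$, the subfamily $\{\Phi(z;\sigma_i) = \Phi(z;e_{i0},\sigma_i) : \sigma_i > 0\}$ and check that it is nonembedded at every significance level $\alpha \in (0,1)$.

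Verifying nonembeddedness is routine but is the one step that genuinely requires checking. From the explicit form $\Phi^{-1}(\beta; e_{i0},\sigma) = e_{i0} + (\sqrt{3}\sigma/\pi)\ln(\beta/(1-\beta))$, the quantile is strictly increasing in $\sigma$ for $\beta > 1/2$ and strictly decreasing in $\sigma$ for $\beta < 1/2$. Hence, for any candidate $\sigma_{j0}$ and any alternative $\sigma \neq \sigma_{j0}$, picking any $\beta \in (0, \alpha/2]$ forces either $\Phi^{-1}(\beta; \sigma_{j0}) > \Phi^{-1}(\beta; \sigma)$ or $\Phi^{-1}(1-\beta; \sigma) > \Phi^{-1}(1-\beta; \sigma_{j0})$, which is exactly the nonembedded definition. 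This is the same monotonicity invoked in Theorem \ref{c1}, now restricted to the $\sigma$-coordinate rather than to the $e$-coordinate.

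With nonembeddedness in hand, the next step is to invoke Theorem \ref{thm3} with $\theta_i$ specialized to $\sigma_i$, which produces the generic rejection region in the form (\ref{mw}) with $\theta_{j0}$ replaced by $\sigma_{j0}$. The final bookkeeping step is to substitute the closed-form inverse distribution of $\mathscr{N}(e_i, \sigma_{j0})$ at the tail quantiles $\alpha/2$ and $1-\alpha/2$, which immediately yields $\Phi^{-1}(\alpha; \sigma_{j0}) = e_i + (\sqrt{3}\sigma_{j0}/\pi)\ln(\alpha/(1-\alpha)) = e_i + \sigma_{j0}\Phi_0^{-1}(\alpha)$, matching the stated formula (here the mean coordinate is $e_i$ because the data vector $z^{(i)}$ comes from population $i$, whose known mean is $e_{i0}=e_i$, while the standard deviation being probed is $\sigma_{j0}$).

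I do not expect a significant obstacle. The entire substantive content has already been packaged into Theorem \ref{thm3}; the only genuinely new piece is the monotonicity-of-quantile check that underwrites nonembeddedness in $\sigma$, and even that is essentially parallel to the argument already used for the mean parameter in Case i.
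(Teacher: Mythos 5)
Your overall route --- verify a nonembedded condition for the fixed-mean family, specialize Theorem \ref{thm3} to $\theta_i=\sigma_i$, and then substitute the explicit inverse uncertainty distribution of $\mathscr{N}(e_i,\sigma_{j0})$ --- is the same as the paper's. (The paper gives no separate proof of Theorem \ref{c2}: it states that Cases ii and iii follow from Theorem \ref{thm3} ``similar to Theorem \ref{c1}'', and the proof of Theorem \ref{c1} simply asserts nonembeddedness of the normal family and cites Theorem \ref{thm3}.) Your final bookkeeping, including the point that the mean coordinate remains $e_i$ while the scale being probed is $\sigma_{j0}$, is correct.

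However, the one substantive step you singled out is carried out incorrectly. From $\Phi^{-1}(\beta;e_i,\sigma)=e_i+(\sqrt{3}\,\sigma/\pi)\ln(\beta/(1-\beta))$ with $\beta\le\alpha/2<1/2$, the lower quantile is strictly \emph{decreasing} in $\sigma$ and the upper quantile $\Phi^{-1}(1-\beta;e_i,\sigma)$ is strictly \emph{increasing} in $\sigma$. Hence for an alternative $\sigma<\sigma_{j0}$ one has $\Phi^{-1}(\beta;\sigma_{j0})<\Phi^{-1}(\beta;\sigma)$ and $\Phi^{-1}(1-\beta;\sigma)<\Phi^{-1}(1-\beta;\sigma_{j0})$ for \emph{every} admissible $\beta$: the alternative's quantile interval is strictly nested inside that of $\sigma_{j0}$, so neither inequality in the nonembedded definition holds. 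Your claim that ``any alternative $\sigma\neq\sigma_{j0}$ and any $\beta$'' forces one of the two inequalities is therefore false on the side $\sigma<\sigma_{j0}$; a pure scale family with fixed center is precisely the embedded situation there. Under the paper's literal, existential-in-$\theta$ phrasing of nonembeddedness the conclusion still survives, because any witness $\sigma>\sigma_{j0}$ satisfies the first inequality; but under the standard Ye--Liu reading (the inequalities must be attainable for \emph{all} $\theta\neq\theta_0$), the fixed-mean one-parameter subfamily is not nonembedded, and your verification as written does not establish the hypothesis needed to invoke Theorem \ref{thm3}. You should either restate the check as an existence claim in $\sigma$ (matching the paper's stated definition) or argue through the full two-parameter normal family as in Theorem \ref{c1}.
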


\textit{Case iii.} If \(e_i\) and \(\sigma_i (i=1, 2,\ldots, n)\) are unknown, the homogeneity test of \(e_i\) and \(\sigma_i (i=1, 2,\ldots, n)\) is written by
\begin{equation}\label{musigma}
H_0: e_1=e_2=\cdots =e_n~\text{and}~\sigma_1=\sigma_2=\cdots =\sigma_n.
\end{equation}

\begin{theorem}\label{c3}
For Case iii, 
\(\xi_i(i=1,\ldots,n)\) respectively follow \(\Phi(z^{(i)};e_{i0}, \sigma_{i0})\).
Then, the rejection region for \(H_{0}\) at the significance level \(\alpha\) is
 \begin{equation*}
 \label{MW3}
W = \left\{
\begin{aligned}
&(z^{(i)}_1, z^{(i)}_2, \dots, z^{(i)}_{m_i}):\exists~i \neq j~(i,j\in\{1,2,\ldots,n\})\text{ such that at least } \alpha \text{ of indexes }p \text{'s}\\
&\text{ with }1 \leq p \leq m_i \text{ satisfy }z^{(i)}_p < \Phi^{-1} (\alpha/2; e_{j0},\sigma_{j0})\text{ or } z^{(i)}_p > \Phi^{-1} (1-\alpha/2;e_{j0}, \sigma_{j0})
\end{aligned}
\right\},
\end{equation*}
where
$
\Phi^{-1} (\alpha; e_{j0}, \sigma_{j0})=e_{j0}+\frac{\sigma_{j0}\sqrt{3}}{\pi}\ln_{}{(\frac{\alpha}{1-\alpha})} =e_{j0}+\sigma_{j0}\Phi_0^{-1}(\alpha).
$
\end{theorem}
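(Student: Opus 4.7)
The plan is to derive Theorem \ref{c3} as a direct specialization of Theorem \ref{thm3}, treating the unknown parameter as the vector $\theta_i = (e_i, \sigma_i)$ drawn from $\Theta = \mathbb{R}\times(0,\infty)$, so that the uncertainty distribution family of interest is $\{\mathscr{N}(e,\sigma) : (e,\sigma)\in \Theta\}$. The first step is to verify the nonembeddedness hypothesis required by Theorem \ref{thm3}: given any fixed $\theta_0 = (e_0,\sigma_0)$ and any significance level $\alpha$, I need to exhibit some $\theta=(e,\sigma)\in \Theta$ and some $\beta\in(0,\alpha/2]$ such that either $\Phi^{-1}(\beta;\theta_0)>\Phi^{-1}(\beta;\theta)$ or $\Phi^{-1}(1-\beta;\theta)>\Phi^{-1}(1-\beta;\theta_0)$. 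Using the explicit formula $\Phi^{-1}(\alpha;e,\sigma)=e+\frac{\sqrt{3}\sigma}{\pi}\ln(\alpha/(1-\alpha))$ this is easily arranged by perturbing either coordinate (e.g., shifting $e$ upward, or increasing $\sigma$), which produces a strict ordering in the required tail quantiles.

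With nonembeddedness in hand, the second step is purely mechanical: instantiate Theorem \ref{thm3} with $\theta_{i0}=(e_{i0},\sigma_{i0})$ and $\Phi(z^{(i)};\theta_{i0})=\Phi(z^{(i)};e_{i0},\sigma_{i0})$. Since rejecting the homogeneity null requires at least one pair $(i,j)$ of indices for which the $j$th acceptance interval fails on a proportion at least $\alpha$ of the $i$th sample, the general rejection region (\ref{mw}) transcribes verbatim into the claimed set $W$, with the quantile bounds $\Phi^{-1}(\alpha/2;e_{j0},\sigma_{j0})$ and $\Phi^{-1}(1-\alpha/2;e_{j0},\sigma_{j0})$ in place of the generic $\Phi^{-1}(\cdot;\theta_{j0})$. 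The third and last step is to substitute the closed-form normal inverse uncertainty distribution to obtain $\Phi^{-1}(\alpha;e_{j0},\sigma_{j0})=e_{j0}+\frac{\sqrt{3}\sigma_{j0}}{\pi}\ln(\alpha/(1-\alpha))=e_{j0}+\sigma_{j0}\Phi_0^{-1}(\alpha)$, which is exactly the formula stated.

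The only real subtlety, and thus the main place to be careful, is the logical match between the compound null $H_0$ of Case iii (simultaneous equality in both the means \emph{and} the standard deviations) and the vector parameter in Theorem \ref{thm3}. I would emphasize that equality of vector parameters $\theta_i=\theta_j$ is equivalent to the conjunction $e_i=e_j$ and $\sigma_i=\sigma_j$, so that the disjunctive alternative $H_1$ in (\ref{musigma}) lines up with the alternative in (\ref{MH0}). Once this identification is explicit, the UFWER control from Theorem \ref{UFWER} and the pairwise reasoning from Theorem \ref{thm2} used inside the proof of Theorem \ref{thm3} apply without modification, and the proof concludes by invoking Theorem \ref{thm3}.
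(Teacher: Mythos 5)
Your proposal is correct and follows essentially the same route as the paper: the paper treats Case iii exactly as a direct instance of Theorem \ref{thm3} with the vector parameter $\theta_i=(e_i,\sigma_i)$, justified by the nonembeddedness of the normal uncertainty distribution family for any $(e_0,\sigma_0)$ at any level $\alpha$ (as stated in the proof of Theorem \ref{c1}), followed by substitution of the closed-form inverse uncertainty distribution. Your additional remark identifying equality of vector parameters with the conjunction of the mean and standard-deviation equalities is a point the paper leaves implicit, but it does not change the argument.
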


After the homogeneity test of normal uncertain populations,
 we may reject or accept the null hypothesis \(H_0\) under Cases i, ii, or iii. 
Through Theorems \ref{c1}-\ref{c3}, if we accept \( H_0 \), there is no significant difference in the unknown parameters among the normal uncertain populations. In this case, we further analyze whether the parameters are equal to a fixed constant. If \( H_0 \) is rejected, it indicates significant differences in the parameters across populations. However, if some populations show no significant difference, we can still test whether these parameters are equal to a fixed constant.

\textit{Common tests of mean and deviation parameters}.   Suppose the homogeneity test shows among \( n \) populations, there are \( k~(2\leq k\leq n) \) populations (for example, \( \xi_1, \xi_2, \ldots, \xi_k \)) have no significant differences in their unknown parameters, i.e., \( \theta_1 = \theta_2 = \cdots = \theta_k \). Based on the three cases below, we propose common tests for normal uncertain populations.

\textit{Case I.} Under \(e_1=e_2=\cdots=e_k=e\), the common test 
for \( e \) being equal to a fixed constant \( e_0 \) is expressed by 
\begin{equation}\label{e0}
H_{0}^{*}: e = e_0 \text{ versus } H_{1}^{*}: e \neq e_0.
\end{equation} 

Under Case I, \(\xi_i\) has the uncertainty distribution \(\mathscr{N}(e,\sigma_i)(i=1,2,...,k)\). To facilitate parameter estimation, we need to adjust the data \((z^{(i)}_1, z^{(i)}_2, \ldots,z^{(i)}_{m_i})(i=1,2,...,k)\), to unify the standard deviations of different datasets without changing the means, achieved through the transformation:
\[
\frac{z_l^{(i)}-e_i}{\sigma_i}+e_i,\quad l=1,2,...,m_i,i=1,2,...,k.
\]
In this way, each set of adjusted data has the same uncertainty distribution \(\mathscr{N}(e,1)\). 
Then, \(k\) sets of adjusted data are merged into a data set, denoted  by
\begin{align*}
\{z_{1}, z_{2},..., z_{N}\}=& \left\{\frac{z_{1}^{(1)}-e_{1}}{\sigma_1}~+e_{1}, \frac{z_{2}^{(1)}-e_{1}}{\sigma_1}~+e_{1}, \ldots, \frac{z_{m_1}^{(1)}-e_{1}}{\sigma_1}+e_{1},\frac{z_{1}^{(2)}-e_{2}}{\sigma_2}~+e_{2}, \frac{z_{2}^{(2)}-e_{2}}{\sigma_2}~+e_{2}, \ldots,\right. \\
&~~~\frac{z_{m_2}^{(2)}-e_{2}}{\sigma_2}+e_{2}, \ldots, \left. \frac{z_{1}^{(k)}-e_{k}}{\sigma_k}+e_{k}, \frac{z_{2}^{(k)}-e_{k}}{\sigma_k}+e_{k}, \ldots, \frac{z_{m_k}^{(k)}-e_{k}}{\sigma_k}+e_{k}\right\}
\end{align*}
for parameter estimation and the common test (\ref{e0}). From (\ref{comW}), the rejection region for \(H_{0}^*\) in (\ref{e0}) at significance level \(\alpha\) is
\begin{equation*}
W = \left\{
\begin{aligned}
&(z_{1}, z_{2}, \dots, z_{N}) : \text{there are at least } \alpha \text{ of indexes } p \text{'s with } 1 \leq p \leq N \\
&\quad\quad~~~\text{such that }z_{p} < \Phi^{-1}(\alpha/2; e_{0}) \text{ or } z_{p} > \Phi^{-1}(1 - \alpha/2; e_{0})
\end{aligned}
\right\},
\end{equation*}
where 
$ \Phi^{-1} (\alpha; e_0)=e_0+\frac{\sqrt{3}}{\pi}\ln_{}{(\frac{\alpha}{1-\alpha})}.$

\textit{Case II.} Under \(\sigma_1=\sigma_2=\cdots=\sigma_k=\sigma\), the common test 
for \( \sigma \) is represented as
\begin{equation}\label{sigma0}
H_{0}^{*}: \sigma = \sigma_0 \text{ versus } H_{1}^{*}: \sigma \neq \sigma_0.
\end{equation} 

For Case II, \(\xi_i\) has the uncertainty distribution \(\mathscr{N}(e_i,\sigma)(i=1,2,\ldots,k)\). Similar to Case I, we need to adjust the data \((z^{(i)}_1, z^{(i)}_2, \ldots, z^{(i)}_{m_i})(i=1,2,\ldots,k)\) to unify the means of different datasets without changing the standard deviations, i.e., \(z_l^{(i)}-e_i(l=1,2,\ldots,m_i, i=1,2,\ldots,k)\). Thus, each set of adjusted data has the same uncertainty distribution \(\mathscr{N}(0,\sigma)\). Further, we adjust the \(k\) sets of data and combine them into one set as follows:
\begin{align*}
\{z_{1}, z_{2},..., z_{N}\}=&\{z_{1}^{(1)}-e_1, z_{2}^{(1)}-e_1,..., z_{m_1}^{(1)}-e_1, z_{1}^{(2)}-e_2, z_{2}^{(2)}-e_2,...,\\
&~ z_{m_2}^{(2)}-e_2,..., z_{1}^{(k)}-e_k, z_{2}^{(k)}-e_k,..., z_{m_k}^{(k)}-e_k\}.
\end{align*}
Given the data \(\{z_{1}, z_{2},..., z_{N}\}\), through (\ref{comW}), the rejection region of \(H_{0}^*\) in (\ref{sigma0}) at significance level \(\alpha\) is
\begin{equation*}
W = \left\{
\begin{aligned}
& (z_{1}, z_{2}, \dots, z_{N}) : \text{there are at least } \alpha \text{ of indexes } p \text{'s with } 1 \leq p \leq N \\
&\quad\quad~~~ \text{such that }z_{p} < \Phi^{-1}(\alpha/2; \sigma_{0}) \text{ or } z_{p} > \Phi^{-1}(1 - \alpha/2; \sigma_{0})
\end{aligned}
\right\},
\end{equation*}
where 
$ \Phi^{-1} (\alpha;\sigma_0)=\sigma_0\frac{\sqrt{3}}{\pi}\ln_{}{(\frac{\alpha}{1-\alpha})}.$

\textit{Case III.} Under \(e_1=e_2=\cdots=e_k=e\) and \(\sigma_1=\sigma_2=\cdots=\sigma_k=\sigma\), the common test 
for \( e \) and \(\sigma\) is shown as
\begin{equation}\label{musigma0}
\begin{aligned}
H_{0}^{*}: e = e_0 \text{ and } \sigma = \sigma_0 \text{ versus } H_{1}^{*}: e \neq e_0 \text{ or } \sigma \neq \sigma_0.
\end{aligned}
\end{equation}

Under Case III, it is obvious that \(\xi_i(i=1,\ldots,k)\) has the uncertainty distribution \(\mathscr{N}(e,\sigma)\). Hence, the \(k\) sets of data \((z^{(i)}_1, z^{(i)}_2, \ldots, z^{(i)}_{m_i})(i=1,2,\ldots,k)\) can be merged into one set without adjustment, as shown below.
\begin{align*}
\{z_{1}, z_{2},..., z_{N}\}=\{z_{1}^{(1)}, z_{2}^{(1)},..., z_{m_1}^{(1)}, z_{1}^{(2)}, z_{2}^{(2)},..., z_{m_2}^{(2)},..., z_{1}^{(k)}, z_{2}^{(k)},..., z_{m_k}^{(k)}\}.
\end{align*}
Given a significance level \(\alpha\), the rejection region of \(H_{0}^*\) in (\ref{musigma0}) is
\begin{equation*}
W = \left\{
\begin{aligned}
&(z_{1}, z_{2}, \dots, z_{N}) : \text{there are at least } \alpha\text{ of indexes } p \text{'s with } 1 \leq p \leq N \\
&\quad \text{ such that }z_{p} < \Phi^{-1}(\alpha/2;e_0, \sigma_{0}) \text{ or } z_{p} > \Phi^{-1}(1 - \alpha/2;e_0, \sigma_{0})
\end{aligned}
\right\},
\end{equation*}
where 
$ \Phi^{-1} (\alpha; e_0, \sigma_0)=e_0+\sigma_0\frac{\sqrt{3}}{\pi}\ln_{}{(\frac{\alpha}{1-\alpha})}. $

\section{Numerical simulations}\label{sec4}
This section presents three numerical simulations of homogeneity and common tests for normal uncertain populations under various parameter settings. 

\begin{example}\label{exam1}
\rm Let \(\xi_i(i=1,2,3)\) be three normal uncertain populations with normal uncertainty distributions \(\mathscr{N}(4.5, \sigma_1)\), \(\mathscr{N}(5, \sigma_2)\), \(\mathscr{N}(5.5, \sigma_3)\), where   \(\sigma_i(i=1,2,3)\) are unknown parameters.  And \((z^{(i)}_1, z^{(i)}_2, \ldots, z^{(i)}_{m_i})\) is the sample of the uncertain populations \(\xi_i(i=1,2,3)\), as shown in Table \ref{Data of Example 1}. To test whether \(\sigma_i(i=1,2,3)\) of uncertain populations are equal, we consider the following hypotheses: 
\[
H_0:\sigma_1=\sigma_2=\sigma_3~\text{versus}~H_1:\sigma_1, \sigma_2, \sigma_3~\text{are not all equal}.
\]
Utilizing the method of moments, we obtain \(\sigma_{10} =1.420 \), \(\sigma_{20} = 1.348\), \(\sigma_{30} = 1.434\). 
 The hypothesis tests confirm that \(\xi_i(i=1,2,3)\) follow \(\mathscr{N}(4.5, 1.420)\), \(\mathscr{N}(5, 1.348)\), and \(\mathscr{N}(5.5, 1.434)\), respectively.
 It follows from Theorem \ref{c2} that the rejection region of \(H_{0}\) at the significance level \(\alpha=0.05\) is
 \begin{equation*}
W = \left\{
\begin{aligned}
&(z^{(i)}_1, z^{(i)}_2, \dots, z^{(i)}_{m_i}):\exists~i \neq j~(i,j\in\{1,2,3\})\text{ such that at least } \alpha \text{ of indexes } p\text{'s}\\
&~~\text{ with }1 \leq p \leq m_i \text{ satisfy }z^{(i)}_p < \Phi^{-1} (\alpha/2; \sigma_{j0})\text{ or } z^{(i)}_p > \Phi^{-1} (1-\alpha/2; \sigma_{j0})
\end{aligned}
\right\}, 
\end{equation*}
where
$
\Phi^{-1} (\alpha; \sigma_{j0})=e_i+\frac{\sigma_{j0}\sqrt{3}}{\pi}\ln_{}{(\frac{\alpha}{1-\alpha})} =e_i+\sigma_{j0}\Phi_0^{-1}(\alpha).
$

\begin{table}[h]
 \centering
 \caption{Samples of the populations \(\xi_i (i=1,2,3)\) in Example \ref{exam1}.}
 \label{Data of Example 1}
 \begin{tabular*}{\hsize}{@{}@{\extracolsep{\fill}}cccccccccccccc@{}}
 \hline
 \(i\) & \(j\) & \multicolumn{12}{c}{\(z^{(i)}_j\)} \\ \hline
 \multirow{3}{*}{1} 
 & 1-12  & 6.38 & 4.56 & 0.01 & 5.25 & 4.16 & 4.59 & 5.81 & 4.61 & 4.64 & 6.92 & 6.70 & 5.05 \\
 & 13-24 & 2.94 & 3.60 & 2.46 & 3.29 & 6.37 & 3.29 & 2.51 & 3.34 & 5.22 & 5.57 & 4.82 & 3.96 \\
 & 25-36 & 3.27 & 3.83 & 4.96 & 4.56 & 3.57 & 5.33 & 4.70 & 4.18 & 6.43 & 6.57 & 6.07 & 4.36 \\
 \hline
 \multirow{4}{*}{2} 
 & 1-12  & 4.16 & 4.93 & 4.91 & 2.51 & 5.21 & 5.29 & 8.45 & 6.15 & 5.41 & 2.85 & 3.80 & 6.12 \\
 & 13-24 & 4.12 & 6.68 & 3.52 & 5.58 & 3.42 & 4.95 & 5.17 & 5.82 & 5.28 & 5.70 & 5.46 & 4.97 \\
 & 25-36 & 4.39 & 4.61 & 2.53 & 5.80 & 4.17 & 3.64 & 5.27 & 3.34 & 1.72 & 7.40 & 6.12 & 5.16 \\
 & 37-48 & 5.40 & 7.59 & 5.85 & 6.09 & 5.87 & 4.23 & 5.83 & 4.68 & 3.56 & 2.91 & 4.36 & 6.15 \\
 \hline
 \multirow{5}{*}{3}
 & 1-12  & 6.29 & 4.49 & 5.68 & 5.29 & 5.91 & 2.92 & 7.51 & 4.26 & 5.59 & 5.34 & 8.88 & 5.74 \\
 & 13-24 & 1.36 & 5.30 & 5.10 & 6.84 & 5.13 & 7.78 & 4.21 & 4.13 & 2.85 & 6.00 & 4.65 & 5.56 \\
 & 25-36 & 5.50 & 5.46 & 3.96 & 5.86 & 5.07 & 6.19 & 6.39 & 1.24 & 3.07 & 5.65 & 4.30 & 6.08 \\
 & 37-48 & 4.28 & 7.26 & 6.63 & 4.93 & 4.09 & 4.28 & 5.92 & 7.14 & 5.33 & 5.37 & 6.46 & 7.01 \\
 & 49-60 & 5.86 & 7.98 & 4.28 & 5.70 & 5.96 & 7.83 & 4.55 & 5.68 & 6.62 & 5.15 & 6.33 & 5.19 \\
 \hline
 \end{tabular*}
\end{table}

The results of the uncertain homogeneity test are shown in Table \ref{results of the uncertain homogeneity test in Example 1} and Fig. \ref{results of the uncertain homogeneity test in Example 1 (f)}. Since \( 36 \times 0.05=1.8, 48 \times 0.05=2.4, 60 \times 0.05=3\), we can see that there only 1 \(z_j^{(1)}(j=3)\) not in \(AI(z^{(1)};\sigma_{i0})(i=2,3)\), only 2 \(z_j^{(2)}(j=7,33)\) not in \(AI(z^{(2)};\sigma_{i0})(i=1,3)\), only 3 \(z_j^{(3)}(j=11,13,32)\) not in \(AI(z^{(3)};\sigma_{i0})(i=1,2)\); that is to say, \(H_0\)  cannot be rejected, thus there is no significant difference in the standard deviations of \(\xi_1, \xi_2,\xi_{3} \) at the significance level 0.5. 


\begin{table}[h]
\centering
\caption{Results of the uncertain homogeneity test in Example \ref{exam1}.}
\label{results of the uncertain homogeneity test in Example 1}
\begin{tabular*}{\hsize}{@{}@{\extracolsep{\fill}}cccc@{}}
\hline
\hspace{1em}\(z^{(i)}\) &\hspace{1em} \(AI(z^{(i)};\sigma_{10})\) &\hspace{1em} \(AI(z^{(i)};\sigma_{20})\) &\hspace{1em} \(AI(z^{(i)};\sigma_{30})\)\hspace{1em} \\ 
\hline
\hspace{1em}\(z^{(1)}\) &\hspace{1em} [1.632, 7.368] &\hspace{1em} [1.778, 7.223] &\hspace{1em} [1.603, 7.397]\hspace{1em} \\ 
\hspace{1em}\(z^{(2)}\) &\hspace{1em} [2.132, 7.868] &\hspace{1em} [2.278, 7.723] &\hspace{1em} [2.103, 7.897]\hspace{1em} \\ 
\hspace{1em}\(z^{(3)}\) &\hspace{1em} [2.632, 8.368] &\hspace{1em} [2.778, 8.223] &\hspace{1em} [2.603, 8.397]\hspace{1em} \\ 
\hline
\end{tabular*}
\end{table}

\begin{figure}[h]
\centering
\includegraphics[width=6in]{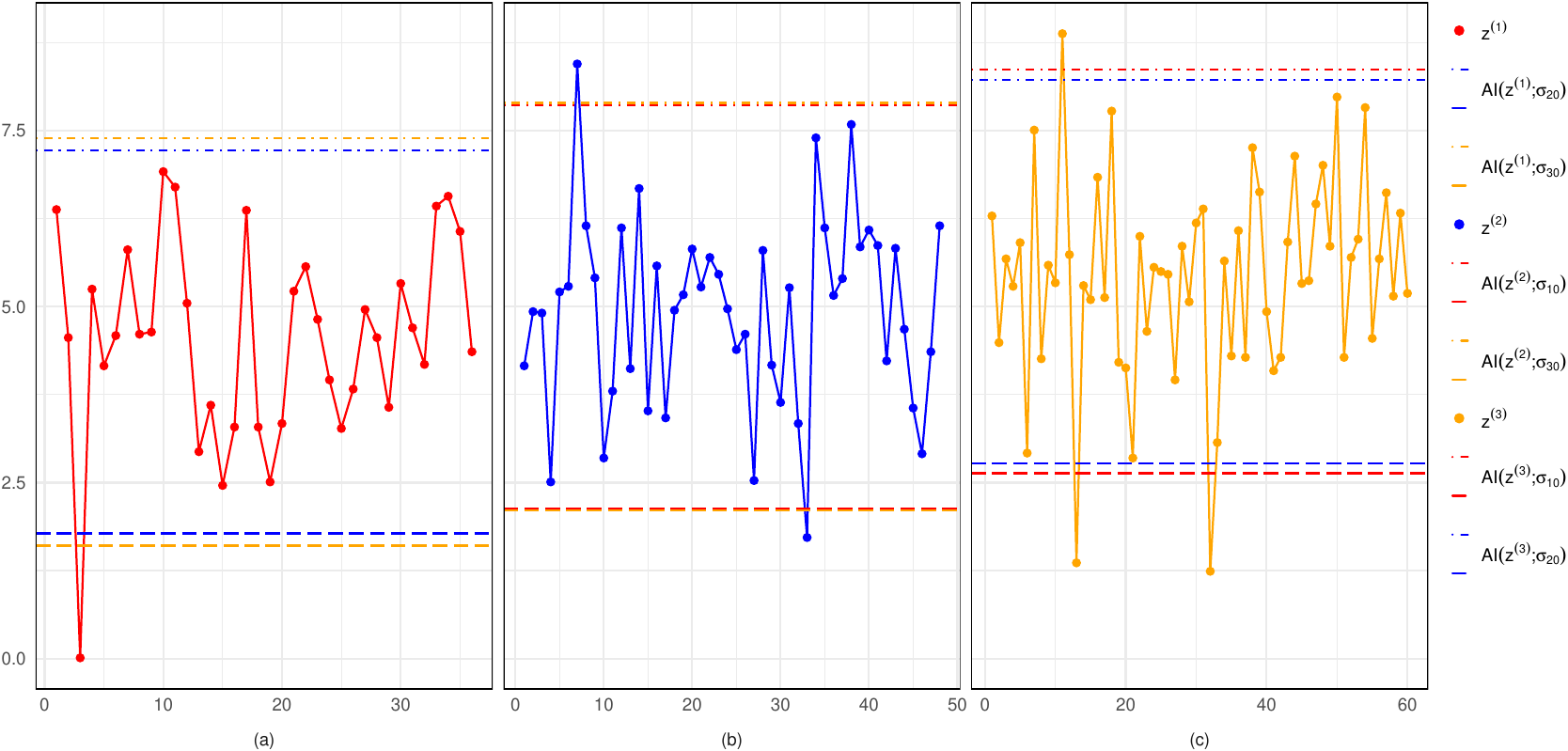}
\caption{Results of the uncertain homogeneity test in Example 1.
}
\label{results of the uncertain homogeneity test in Example 1 (f)}
\end{figure}

Since there is no significant difference in the unknown parameters of each population, the uncertain common test can be performed to determine whether the unknown parameter \(\sigma_1=\sigma_2=\sigma_3=\sigma\) is equal to a fixed constant \(\sigma_0\). For this, we consider the following hypotheses:
\[
H_{0}^{*}: \sigma = \sigma_0 \text{ versus } H_{1}^{*}: \sigma \neq \sigma_0.
\]
3 adjusted sets of data are then combined into one set, i.e.,
\begin{align*}
\{z_{1}, z_{2},..., z_{144}\}=&\{z_{1}^{(1)}-4.5, z_{2}^{(1)}-4.5,..., z_{36}^{(1)}-4.5, z_{1}^{(2)}-5, z_{2}^{(2)}-5,..., z_{48}^{(2)}-5, \\&~z_{1}^{(3)}-5.5, z_{2}^{(3)}-5.5,..., z_{60}^{(3)}-5.5\},
\end{align*}
we can calculate
$
\sigma_0^2 = \frac{1}{144} \sum_{i=1}^{144} (z_i)^2=1.404^2, 
$
given a significance level \(\alpha=0.05\), we obtain
$
\Phi^{-1}(\alpha/2;\sigma_0)=-2.836, \Phi^{-1}(1-\alpha/2;\sigma_0)=2.836,$
where
$
\Phi^{-1}(\alpha;\sigma_0)=\frac{1.404\sqrt{3}}{\pi}\ln_{}{(\frac{\alpha}{1-\alpha})}.
$
Since \(\alpha \times 144 =7.2\), we have
\begin{equation*}
W = \left\{
\begin{aligned}
& (z_{1}, z_{2}, \dots, z_{144}) : \text{there are at least } 8\text{ of indexes } p \text{'s with } \\
&\quad~~ 1 \leq p \leq 144 \text{ such that } z_p < -2.836 \text{ or } z_p > 2.836
\end{aligned}
\right\}.
\end{equation*}
Only 6 \(z_j(j=3, 43, 69, 95, 97, 116)\) not in \([-2.836, 2.836]\). Thus \((z_{1}, z_{2},..., z_{144}) \notin W\). \(H_{0}^{*}\) cannot be rejected, that is to say the unknown parameter \(\sigma\) is equal to a fixed constant \(\sigma_0=1.404\).
\end{example}

\begin{example}\label{exam2}
\rm Let \(\xi_i(i=1,2,3)\) be three normal uncertain populations with normal uncertainty distributions \(\mathscr{N}(e_1, 1), \mathscr{N}(e_2, 1.5), \mathscr{N}(e_3, 2)\), where means \(e_i(i=1,2,3)\) are unknown parameters. And
\((z^{(i)}_1, z^{(i)}_2, \ldots, z^{(i)}_{m_i})\) represent the samples of the uncertain populations \(\xi_i(i=1,2,3)\), as presented in Table \ref{Data of Example 2}.  To test whether the means \(e_i(i=1,2,3)\) of uncertain populations are equal, we consider the following hypotheses: 
\[
H_0:e_1=e_2=e_3~\text{versus}~H_1:e_1, e_2, e_3~\text{are not all equal}.
\]
\begin{table}[h]
 \centering
 \caption{Samples of the populations \(\xi_i (i=1,2,3)\) in Example \ref{exam2}.}
 \label{Data of Example 2}
 \begin{tabular*}{\hsize}{@{}@{\extracolsep{\fill}}cccccccccccccc@{}}
 \hline
 \(i\) & \(j\) &\multicolumn{12}{c}{\(z^{(i)}_j\)} \\ \hline
 \multirow{4}{*}{\(1\)} 
 & \(1-12\)  & 6.32 & 3.54 & 4.81 & 4.87 & 2.33 & 5.24 & 5.42 & 4.64 & 5.03 & 4.64 & 5.45 & 4.93 \\
 & \(13-25\) & 4.18 & 4.69 & 4.64 & 7.47 & 3.51 & 3.04 & 5.37 & 5.31 & 4.49 & 4.27 & 6.64 & 4.01 \\
 & \(25-36\) & 5.28 & 4.72 & 5.67 & 5.86 & 4.11 & 4.14 & 4.96 & 5.38 & 5.01 & 5.78 & 3.59 & 5.16 \\
 & \(37-48\) & 4.88 & 4.59 & 4.25 & 5.01 & 5.33 & 4.69 & 6.28 & 6.29 & 6.30 & 3.92 & 5.58 & 5.89 \\
 \hline
 \multirow{3}{*}{\(2\)} 
 & \(1-12\)  & 5.79 & 4.23 & 7.86 & 6.88 & 4.02 & 7.22 & 7.26 & 4.56 & 6.28 & 2.94 & 4.62 & 4.36 \\
 & \(13-24\) & 5.61 & 4.21 & 6.55 & 5.53 & 4.31 & 5.06 & 5.43 & 4.59 & 6.99 & 5.65 & 4.42 & 3.64 \\
 & \(25-36\) & 3.71 & 4.76 & 5.13 & 3.71 & 6.97 & 5.57 & 4.65 & 7.31 & 4.25 & 4.68 & 5.13 & 5.17 \\
 \hline
 \multirow{5}{*}{\(3\)}
 & \(1-12\)  & 6.16 & 4.49 & 2.17 & 2.13 & 4.90 & 4.96 & 5.34 & 7.28 & 8.52 & 5.74 & 4.60 & 3.56 \\
 & \(13-24\) & 4.73 & 2.26 & 4.60 & 6.75 & 7.02 & 1.57 & 4.10 & 7.62 & 8.69 & 6.45 & 5.61 & 4.08 \\
 & \(25-36\) & 2.90 & 5.94 & 8.12 & 1.91 & 5.38 & 7.52 & 4.95 & 4.57 & 6.55 & 5.71 & 1.52 & 5.13 \\
 & \(37-48\) & 6.47 & 4.78 & 6.65 & 4.91 & 3.61 & 8.08 & 7.47 & 5.21 & 1.07 & 5.74 & 3.97 & 4.04 \\
 & \(49-60\) & 3.85 & 4.34 & 4.06 & 5.53 & 4.54 & 2.63 & 4.57 & 5.09 & 5.08 & 5.38 & 7.85 & 6.49 \\
 \hline
 \end{tabular*}
\end{table}
Utilizing the method of moments, we obtain \(e_{10} =4.948 \), \(e_{20} = 5.251\), \(e_{30} = 5.082\). 
 The hypothesis tests confirm that \(\xi_i(i=1,2,3)\) follow \(\mathscr{N}(4.948,1)\), \(\mathscr{N}(5.251,1.5)\), \(\mathscr{N}(5.082,2)\), respectively.
 It follows from Theorem \ref{c1} that the rejection region of \(H_{0}\) at the significance level \(\alpha=0.05\) is 
\begin{equation*}
W = \left\{
\begin{aligned}
&~(z^{(i)}_1, z^{(i)}_2, \dots, z^{(i)}_{m_i}):\exists~i \neq j~(i,j\in\{1,2,3\})\text{ such that at least } \alpha \text{ of indexes}\\
& p \text{'s}\text{ with }1 \leq p \leq m_i \text{ satisfy }z^{(i)}_p < \Phi^{-1} (\alpha/2; e_{j0})\text{ or } z^{(i)}_p > \Phi^{-1} (1-\alpha/2; e_{j0})
\end{aligned}
\right\},
\end{equation*}
where
$
\Phi^{-1} (\alpha; e_{j0})=e_{j0}+\frac{\sigma_i\sqrt{3}}{\pi}\ln_{}{(\frac{\alpha}{1-\alpha})} =e_{j0}+\sigma_i\Phi_0^{-1}(\alpha). 
$

The results of the uncertain homogeneity test are shown in Table \ref{results of the uncertain homogeneity test in Example 2} and Fig. \ref{results of the uncertain homogeneity test in Example 2 (f)}. Since \(48 \times 0.05=2.4, 36 \times 0.05=1.8, 60 \times 0.05=3\), we can see that there are 3 \(z_j^{(1)}(j=5,16,18)\) not in \(AI(z^{(1)};\sigma_{i0})(i=2,3)\). Thus, there is a significant difference in the standard deviations of \(\xi_1\) and \(\xi_i(i=2,3)\) at the significance level 0.5. The results show that \(H_{0}\) should be rejected.

\begin{table}[h]
\centering
\caption{Results of the uncertain homogeneity test in Example \ref{exam2}.}
\label{results of the uncertain homogeneity test in Example 2}
\begin{tabular*}{\hsize}{@{}@{\extracolsep{\fill}}cccc@{}}
\hline
\hspace{1em}\(z^{(i)}\) &\hspace{1em} \(AI(z^{(i)};e_{10})\) &\hspace{1em} \(AI(z^{(i)};e_{20})\) &\hspace{1em} \(AI(z^{(i)};e_{30})\)\hspace{1em} \\ 
\hline
\hspace{1em}\(z^{(1)}\) &\hspace{1em} [2.928, 6.968] &\hspace{1em} [3.232, 7.271] &\hspace{1em} [3.063, 7.102]\hspace{1em} \\ 
\hspace{1em}\(z^{(2)}\) &\hspace{1em} [1.918, 7.978] &\hspace{1em} [2.222, 8.281] &\hspace{1em} [2.053, 8.112]\hspace{1em} \\ 
\hspace{1em}\(z^{(3)}\) &\hspace{1em} [0.909, 8.988] &\hspace{1em} [1.212, 9.291] &\hspace{1em} [1.043, 9.122]\hspace{1em} \\ 
\hline
\end{tabular*}
\end{table}

\begin{figure}[h]
\centering
\includegraphics[width=6in]{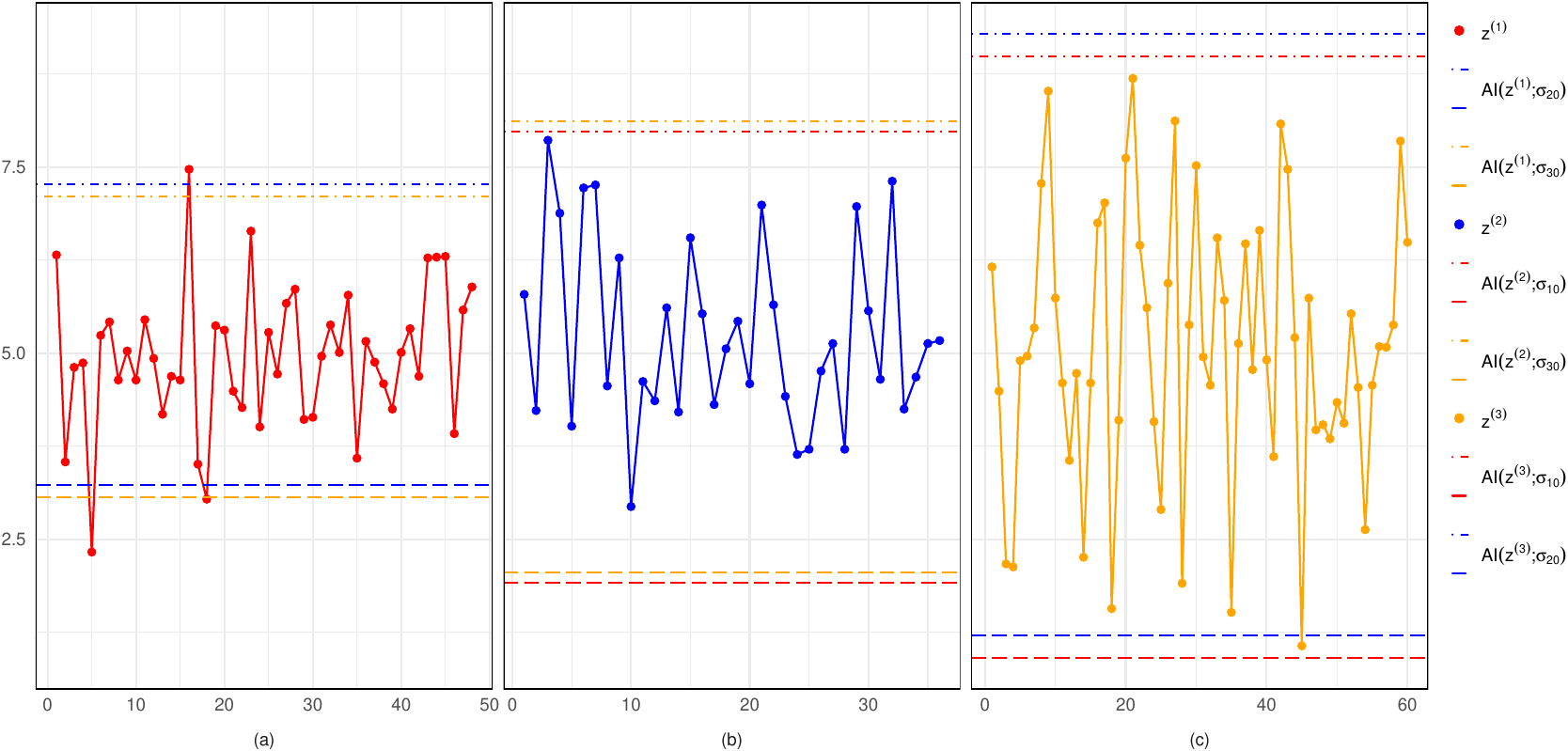}
\caption{Results of the uncertain homogeneity test in Example \ref{exam2}. 
}
\label{results of the uncertain homogeneity test in Example 2 (f)}
\end{figure}

 Since there is no significant difference in the unknown parameters of \(\xi_2\) and \(\xi_3\), the uncertain common test can be performed to determine whether the unknown parameter \(e_2=e_3=e\) is equal to a fixed constant \(e_0\). For this, we consider the following hypotheses:
\[
H_{0}^{*}: e = e_0\text{ versus } H_{1}^{*}: e \neq e_0.
\]
2 adjusted sets of data are then combined into one set, i.e.,
\begin{align*}
\{z_{1}, z_{2},..., z_{144}\}=&
\left\{~\frac{z_{1}^{(2)}-5.251}{1.5}+5.251, \ldots, \frac{z_{36}^{(2)}-5.251}{1.5}+5.251, \right. \\
&\left.\frac{z_{1}^{(3)}-5.082}{2}+5.082, \ldots, \frac{z_{60}^{(3)}-5.082}{2}+5.082~\right\},
\end{align*}
we can calculate
$
e_0 = \frac{1}{96} \sum_{i=1}^{96} z_i=5.146, 
$
given a significance level \(\alpha=0.05\), we obtain
\(
\Phi^{-1}(\alpha/2;e_0)=3.126, \Phi^{-1}(1-\alpha/2;e_0)=7.166,
\)
where 
$
 \Phi^{-1}(\alpha;e_0)= 5.146+\frac{\sqrt{3}}{\pi}\ln_{}{(\frac{\alpha}{1-\alpha})}.
$
 Since \(\alpha \times 96 =4.8\), it follows that
\begin{equation*}
W = \left\{
\begin{aligned}
&~(z_{1}, z_{2}, \dots, z_{96}) : \text{there are at least } 5 \text{ of indexes } p \text{'s}\\
&\text{with }  1 \leq p \leq 96 \text{ such}~\text{that } z_p < 3.126 \text{ or } z_p > 7.166
\end{aligned}
\right\}.
\end{equation*}
\vspace{2pt}
Only \(z_{81}\) not in \([3.126, 7.166]\). Thus, \((z_{1}, z_{2},..., z_{96}) \notin W\). \(H_{0}^{*}\) cannot be rejected; that is, the unknown parameter \(e\) is equal to a fixed constant \(e_0= 5.146\).
\end{example}

\begin{example}\label{exam3}
\rm Let \(\xi_i(i=1,2,3)\) be three normal uncertain populations with normal uncertainty distributions \(\mathscr{N}(e_1, \sigma_1), \mathscr{N}(e_2, \sigma_2), \mathscr{N}(e_3, \sigma_3)\), where means \(e_i(i=1,2,3)\) and standard deviations \(\sigma_i(i=1,2,3)\) are unknown parameters. The samples \( (z^{(i)}_1, z^{(i)}_2, \ldots, z^{(i)}_{m_i}) \) are derived from the uncertain populations \(\xi_i (i = 1, 2, 3)\), illustrated in Table \ref{Data of Example 3}.

\begin{table}[h]
 \centering
 \caption{Samples of the populations \(\xi_i (i=1,2,3)\) in Example \ref{exam3}.}
 \label{Data of Example 3}
 \begin{tabular*}{\hsize}{@{}@{\extracolsep{\fill}}cccccccccccccc@{}}
 \hline
 \(i\) & \(j\) & \multicolumn{12}{c}{\(z^{(i)}_j\)} \\ \hline
 \multirow{3}{*}{1} 
 & 1-12  & 5.79 & 4.23 & 7.86 & 6.88 & 4.02 & 7.22 & 7.26 & 4.56 & 6.28 & 2.94 & 4.62 & 4.36 \\
 & 13-24 & 5.61 & 4.21 & 6.55 & 5.53 & 4.31 & 5.06 & 5.43 & 4.59 & 6.99 & 5.65 & 4.42 & 3.64 \\
 & 25-36 & 3.71 & 4.76 & 5.13 & 3.71 & 6.97 & 5.57 & 4.65 & 7.31 & 4.25 & 4.68 & 5.13 & 5.17 \\
 \hline
 \multirow{4}{*}{2} 
 & 1-12  & 5.03 & 3.99 & 4.26 & 7.12 & 3.50 & 4.23 & 3.60 & 4.58 & 5.71 & 5.07 & 3.91 & 5.42 \\
 & 13-24 & 5.25 & 5.64 & 5.62 & 6.82 & 7.21 & 5.38 & 2.42 & 3.37 & 4.94 & 4.14 & 5.59 & 3.87 \\
 & 25-36 & 5.85 & 2.96 & 7.21 & 3.08 & 5.58 & 4.71 & 4.50 & 4.94 & 3.55 & 4.75 & 4.03 & 3.90 \\
 & 37-48 & 6.00 & 7.27 & 5.61 & 5.20 & 5.82 & 3.34 & 8.42 & 5.35 & 4.66 & 7.11 & 5.29 & 3.34 \\
 \hline
 \multirow{5}{*}{3} 
 & 1-12  & 5.99 & 7.20 & 6.60 & 3.04 & 7.30 & 6.14 & 5.84 & 4.60 & 3.66 & 5.71 & 3.09 & 4.15 \\
 & 13-24 & 6.85 & 4.79 & 3.34 & 4.16 & 5.96 & 5.25 & 5.01 & 5.43 & 2.95 & 5.72 & 3.98 & 5.59 \\
 & 25-36 & 5.79 & 6.98 & 3.87 & 5.43 & 5.42 & 4.99 & 5.62 & 5.31 & 4.77 & 6.21 & 3.61 & 3.41 \\
 & 37-48 & 7.72 & 5.66 & 4.39 & 5.94 & 5.44 & 6.50 & 4.85 & 6.90 & 3.20 & 7.07 & 5.33 & 5.30 \\
 & 49-60 & 5.26 & 3.78 & 5.29 & 2.96 & 4.61 & 6.78 & 4.73 & 7.44 & 4.52 & 5.81 & 5.17 & 3.43 \\
 \hline
 \end{tabular*}
\end{table}

Similar to the test process in Examples \ref{exam1} and \ref{exam2}, we first determine that 
\(\xi_i(i=1,2,3)\) follow \(\mathscr{N}(5.251, 1.215)\), \(\mathscr{N}(4.982, 1.303)\), \(\mathscr{N}(5.197, 1.240)\), respectively, via uncertain hypothesis tests.
Then, by performing an uncertain homogeneity test, we find that \(H_0: e_1=e_2=\cdots =e_n~\text{and}~\sigma_1=\sigma_2=\cdots =\sigma_n\) cannot be rejected at the significance level \(\alpha=0.05\), that is to say, there is no significant difference in the unknown parameters of all populations at significance level \(\alpha\). Finally, the uncertain common test (\ref{musigma0}) is carried out after the data are combined, and the result shows that \(H_{0}^{*}\) cannot be rejected at the significance level 
\(\alpha\), that is to say the unknown parameter \(e\) is equal to a fixed constant \(e_0=5.139\), \(\sigma\) is equal to a fixed constant \(\sigma_0=1.260\).
\end{example}

\section{A real example}\label{sec5}

Here, we use partial data from the research on the population size estimation method for Calloscius erythraeus by \cite{Xu2005}. Six peeled tree banks with multiple clear upper-tooth dental marks were found in the sample plot. Suppose the upper-tooth dental marks on each peeled tree bank come from the same individual of Calloscius erythraeus. The number of individuals of Calloscius erythraeus in this sample plot is determined by analyzing the dental marks data from six peeled tree banks. The dental marks data \(z_j^{(i)} \ (j = 1, \ldots, m_i)\) from the \(i\)th peeled tree bank \(\xi_i (i = 1, 2, \ldots, 6)\) are shown in Table \ref{dental marks data from six peeled tree banks in Example 4}.

\begin{table}[h]
 \centering
 \caption{Dental marks data from six peeled tree banks.}
 \label{dental marks data from six peeled tree banks in Example 4}
 \begin{tabular*}{\hsize}{@{}@{\extracolsep{\fill}}ccccccccc@{}}
 \hline
 \(i\) &\hspace{0.5em} \(j\) &\multicolumn{7}{c}{\(z^{(i)}_j\)}\\ \hline
 1 &\hspace{0.5em} 1-6   &\hspace{0.5em} 2.8 &\hspace{0.5em} 2.8 &\hspace{0.5em} 2.9 &\hspace{0.5em} 2.9 &\hspace{0.5em} 2.9 &\hspace{0.5em} 3.0 &\hspace{0.5em}\\
 2 &\hspace{0.5em} 1-7   &\hspace{0.5em} 2.5 &\hspace{0.5em} 2.5 &\hspace{0.5em} 2.6 &\hspace{0.5em} 2.6 &\hspace{0.5em} 2.6 &\hspace{0.5em} 2.7 &\hspace{0.5em} 2.7 \hspace{0.5em}\\
 3 &\hspace{0.5em} 1-6   &\hspace{0.5em} 2.4 &\hspace{0.5em} 2.4 &\hspace{0.5em} 2.5 &\hspace{0.5em} 2.6 &\hspace{0.5em} 2.6 &\hspace{0.5em} 2.6 &\hspace{0.5em}\\
 4 &\hspace{0.5em} 1-6   &\hspace{0.5em} 2.4 &\hspace{0.5em} 2.4 &\hspace{0.5em} 2.5 &\hspace{0.5em} 2.5 &\hspace{0.5em} 2.6 &\hspace{0.5em} 2.6 &\hspace{0.5em}\\
 5 &\hspace{0.5em} 1-6   &\hspace{0.5em} 2.4 &\hspace{0.5em} 2.5 &\hspace{0.5em} 2.5 &\hspace{0.5em} 2.6 &\hspace{0.5em} 2.6 &\hspace{0.5em} 2.6 &\hspace{0.5em}\\
 6 &\hspace{0.5em} 1-7   &\hspace{0.5em} 2.4 &\hspace{0.5em} 2.4 &\hspace{0.5em} 2.5 &\hspace{0.5em} 2.5 &\hspace{0.5em} 2.6 &\hspace{0.5em} 2.6 &\hspace{0.5em} 2.6 \hspace{0.5em}\\
 \hline
 \end{tabular*}
\end{table}

The uncertain homogeneity test is used to determine whether the means and standard deviations of the dental marks on different peeled tree banks are the same. If so, we consider there to be only one individual of Calloscius erythraeus in the sample area. If not, we will further analyze these differences to determine if there may be multiple individuals of Calloscius erythraeus. 

Before performing the uncertain homogeneity test, it is necessary to determine that the dental marks data from each peeled tree bank follows a normal uncertainty distribution. Let's take the data of \(\xi_1\) as an example. Using the data \((z^{(1)}_1, z^{(1)}_2,..., z^{(1)}_{6})\) of \(\xi_1\) in Table \ref{dental marks data from six peeled tree banks in Example 4}, we can calculate that \(e_{10} =2. 883, \sigma_{10} =0. 069\) by using moment estimation. To test whether \(\xi_1 \sim \mathscr{N}(2. 883, 0. 069)\), we consider the following hypotheses:
 \begin{equation*}
\begin{aligned}
H_0^1:e_1=2. 883~\text{and}~\sigma_1=0. 069 \ \text{versus }
H_1^1:e_1\neq 2. 883~\text{or}~\sigma_1 \neq 0. 069,\quad~
\end{aligned}
\end{equation*}
 given a significance level \(\alpha\) = 0.05, we obtain
 \(
 \Phi^{-1}(\alpha/2;e_{10}, \sigma_{10})=2. 745~, ~\Phi^{-1}(1-\alpha/2;e_{10}, \sigma_{10})=3. 022,
 \)
 where 
 $
 \Phi_1^{-1}(\alpha;e_{10}, \sigma_{10})=2. 883+\frac{0. 069\sqrt{3}}{\pi}\ln_{}{(\frac{\alpha}{1-\alpha})}.
$
 Since \(\alpha \times 6 = 0. 3\),
\begin{equation*}
W_1 = \left\{
\begin{aligned}
& (z^{(1)}_1, z^{(1)}_2, \dots, z^{(1)}_{6}) : \text{there are at least } 1\text{ of indexes } p \text{'s} \\
&~~\text{with }1 \leq p \leq 6 \text{ such that }z_p < 2.745 \text{ or } z_p > 3.022
\end{aligned}
\right\}.
\end{equation*}

We can see that all the \(z_j^{(1)}\) are in \([2.745, 3.022]\). Thus \((z^{(1)}_1, z^{(1)}_2,..., z^{(1)}_{6}) \notin W_1\). \(H_0^1\) cannot be rejected. The results of the hypothesis test of \(\xi_i (i=1, 2,..., 6)\) are shown in Table \ref{Results of the hypothesis test in real Example}. Since \(\alpha \times 7 = 0.35\), the results show no singular point in the hypothesis test of each \(\xi_i\). Thus \((z^{(i)}_1, z^{(i)}_2,..., z^{(i)}_{m_i}) \notin W_i\), \(H_0^i\) cannot be rejected, each \(\xi_i\) follows \(\mathscr{N}(e_{i0}, \sigma_{i0})\).


\begin{table}[h]
\setlength{\tabcolsep}{4.15em}
\centering
\caption{Results of the hypothesis test on dental marks data.}
\label{Results of the hypothesis test in real Example}
\begin{tabular*}{\hsize}{@{}cccccc@{}}
\hline
\hspace{1em}\(z^{(i)}\) & \(e_{i0}\) & \(\sigma_{i0}\) & \(AI(z^{(i)};e_{i0}, \sigma_{i0})\) & Singular Point \\ 
\hline
\hspace{1em}\(z^{(1)}\) & 2.883 & 0.069 & [2.745, 3.022] & 0 \\ 
\hspace{1em}\(z^{(2)}\) & 2.600 & 0.076 & [2.447, 2.753] & 0 \\ 
\hspace{1em}\(z^{(3)}\) & 2.517 & 0.090 & [2.335, 2.698] & 0 \\ 
\hspace{1em}\(z^{(4)}\) & 2.500 & 0.082 & [2.335, 2.665] & 0 \\ 
\hspace{1em}\(z^{(5)}\) & 2.533 & 0.075 & [2.383, 2.684] & 0 \\ 
\hspace{1em}\(z^{(6)}\) & 2.514 & 0.083 & [2.346, 2.683] & 0 \\ 
\hline
\end{tabular*}
\end{table}

To test whether the mean \(e_i\) and the standard deviation \(\sigma_i\) of each uncertain population are equal, respectively, we consider the following hypotheses:
\[
H_0:e_1=e_2=... =e_6~\text{and}~\sigma_1=\sigma_2=... =\sigma_{6}~ \text{versus } H_1:e_1, e_2,..., e_6~\text{or}~\sigma_1, \sigma_2,..., \sigma_{6}~\text{are not all equal},
\]
it follows from Theorem \ref{c3} that the rejection region of \(H_0\) at the significance level \(\alpha=0.05\) is
 \begin{equation*}
W = \left\{
\begin{aligned}
&(z^{(i)}_1, z^{(i)}_2, \dots, z^{(i)}_{m_i}):\exists~i \neq j~(i,j\in\{1,2,...,6\}) \text{ such that at least } \alpha \text{ of indexes }p \text{'s} \\
&\text{ with } 1 \leq p \leq m_i \text{ satisfy }z^{(i)}_p < \Phi^{-1} (\alpha/2; e_{j0},\sigma_{j0})\text{ or } z^{(i)}_p > \Phi^{-1} (1-\alpha/2;e_{j0}, \sigma_{j0})
\end{aligned}
\right\},
\end{equation*}
where
$
\Phi^{-1} (\alpha; e_{j0}, \sigma_{j0})=e_{j0}+\frac{\sigma_{j0}\sqrt{3}}{\pi}\ln_{}{(\frac{\alpha}{1-\alpha})} =e_{j0}+\sigma_{j0}\Phi_0^{-1}(\alpha). 
$

In Table \ref{Results of the hypothesis test in real Example} and Fig. \ref{results of the hypothesis test in Example (f)}, we can see that more than 1 \(z_j^{(1)}\) not in \(AI(z^{(1)};e_{i0}, \sigma_{i0})(i=2,3,4,5,6)\), more than 1 \(z_j^{(2)}\) not in \(AI(z^{(2)};e_{i0}, \sigma_{i0})(i=1,3,4,5,6)\), more than 1 \(z_j^{(k)}(k=3,4,5,6)\) not in \(AI(z^{(k)};e_{i0}, \sigma_{i0})(i=1,2)\), therefore \(H_{0}\) is rejected. It can be seen that the differences in the means or standard deviations are mainly between \(\xi_1, \xi_2\) and \(\xi_k(k=3, 4, 5, 6)\) at the significance level of 0.5. 
\begin{figure}[h]
\centering
\includegraphics[width=6in]{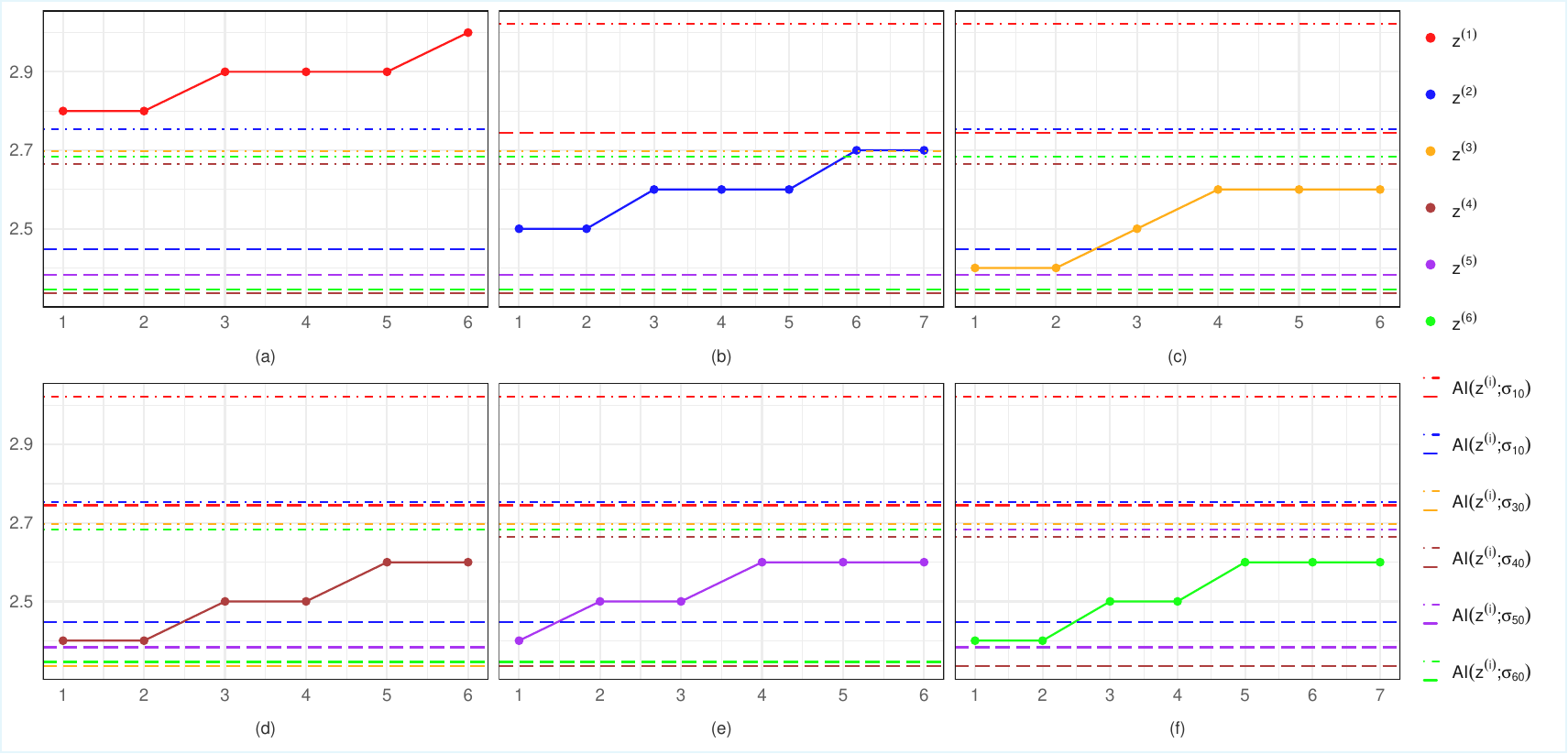}
\caption{Results of the homogeneity test on dental marks data.
}
\label{results of the hypothesis test in Example (f)}
\end{figure}
Based on the results of the uncertain homogeneity test, it can be concluded that three individuals of Callosciurus erythraeus were present in this sample plot. 

Since there is no significant difference in the unknown parameters of \(\xi_k(k=3, 4, 5, 6)\), the uncertain common test can be performed to determine whether the unknown parameter \(e, \sigma\) is equal to a fixed constant \(e_0, \sigma_0\), respectively. For this, we consider the following hypotheses:
\[
H_{0}^{*}: e = e_0 \text{ and } \sigma=\sigma_0\text{ versus } H_{1}^{*}: e \neq e_0 \text{ or } \sigma \neq \sigma_0.
\]
4 sets of data are then combined into one set, i.e.,
\begin{align*}
\{z_{1}, z_{2},..., z_{25}\}= \{&z_{1}^{(3)}, z_{2}^{(3)},..., z_{6}^{(3)}, z_{1}^{(4)}, z_{2}^{(4)},..., z_{6}^{(4)},z_{1}^{(5)}, z_{2}^{(5)},..., z_{6}^{(5)}, z_{1}^{(6)}, z_{2}^{(6)},..., z_{7}^{(6)}\},
\end{align*}
we can calculate
\(
e_0 = \frac{1}{25} \sum_{i=1}^{25} z_i=2. 516, ~
\sigma_0^2 = \frac{1}{25} \sum_{i=1}^{25} (z_i - e_0)^2=0. 083^2, 
\)
given a significance level \(\alpha=0.05\), we obtain
\(
\Phi^{-1}(\alpha/2;e_0, \sigma_0)=2. 348, \Phi^{-1}(1-\alpha/2;e_0, \sigma_0)=2. 684,
\)where 
 \(
 \Phi^{-1}(\alpha;e_0, \sigma_0)=2. 516+0. 083\frac{\sqrt{3}}{\pi}\ln_{}{(\frac{\alpha}{1-\alpha})}.
 \)
Since \(\alpha \times 25 =1. 25\), then
\begin{equation*}
W = \left\{
\begin{aligned}
&~(z_{1}, z_{2}, \dots, z_{25}) : \text{there are at least } 2 \text{ of indexes } p \text{'s}\\&\text{with } 1 \leq p \leq 25 \text{ such that } z_p < 2.348 \text{ or } z_p > 2.684
\end{aligned}
\right\}.
\end{equation*}
As shown in Table \ref{dental marks data from six peeled tree banks in Example 4}, we can see that all the \(z_j\) fall in \([2.348, 2.684]\). Thus \((z_{1}, z_{2},..., z_{25}) \notin W\), \(H_{0}^{*}\) cannot be rejected, that is to say the unknown parameter \(e\) is equal to a fixed constant \(e_0=2.516\), \(\sigma\) is equal to a fixed constant \(\sigma_0=0.083\). 
From this, we can conclude that the upper-tooth dental marks of these three Calloscius erythraeus individuals each follow a normal uncertainty distribution, specifically \( \mathscr{N}(2.883, 0.069) \), \( \mathscr{N}(2.600, 0.076) \), and \( \mathscr{N}(2.516, 0.083) \). 

Unlike the probabilistic method, which could only identify two distinct individuals (grouped as $\xi_1$ and ${\xi_2, \xi_3, \xi_4, \xi_5, \xi_6}$), the uncertain hypothesis test successfully discriminated three individuals. This reveals subtle underlying differences that might be obscured in traditional probabilistic analysis, especially with limited data.

\section{Conclusion}\label{sec6}
This paper establishes a theoretical framework for hypothesis testing across multiple uncertain populations, introducing the uncertain family-wise error rate (UFWER) as a foundation for extending the methodology of Ye and Liu from a single population to multi-population settings.
The framework is built upon two procedures: the uncertain homogeneity test and the uncertain common test. 
While the uncertain homogeneity test assesses whether unknown parameters are equivalent across populations, the uncertain common test extends this analysis, focusing on whether these unknown parameters are equal to a fixed constant.
Uncertain homogeneity tests and uncertain common tests for normal uncertain populations are divided into three cases depending on whether the means and standard deviations of the populations are known. An uncertain homogeneity test can be effectively achieved by constructing a new rejection region. 
An uncertain common test is conducted by performing a single-population hypothesis test under uncertainty after appropriately adjusting and combining the data. Numerical examples demonstrating both the uncertain homogeneity test and the uncertain common test are provided.
Finally, the paper offers a real example of population size estimation for Calloscius erythraeus, demonstrating not only the efficacy of the proposed method but also its favorable sensitivity and practicality in small-sample scenarios, thereby providing a valuable supplement to traditional probabilistic methods.
For future research, there is a clear avenue to explore the application of this method to non-normal uncertain populations and to develop more sophisticated statistical models.

\section*{Acknowledgments} The work was supported by the National Natural Science Foundation of China (12561047), and the 2025 Central Guidance for Local Science and Technology Development Fund (ZYYD2025ZY20).

\section*{Conflict of Interest}
The authors declare that they have no conflict of interest.
\bibliography{sn-bibliography}

\end{document}